\theoremstyle{thmstyleone}%
\newtheorem{theorem}{Theorem}
\newtheorem{lemma}[theorem]{Lemma}
\newtheorem{remark}[theorem]{Remark}
\begin{document}
\begin{titlepage}
\begin{center}
\LARGE \text{On Quantum Perceptron Learning via Quantum Search}
\end{center}
\vspace{0.2cm}
\large\textbf{Author information}
\\
\normalsize
Xiaoyu~Sun$^{1*}$, Mathieu~Roget$ ^1$, Giuseppe Di~Molfetta$^{1,2}$, and Hachem~Kadri$^1$
\\
$^{1}$Aix-Marseille Universit\'e, CNRS, LIS, Marseille, France
$^{2}$ Institut Universitaire de France, Paris, France
\\
$^{*}$Corresponding author, e-mail: <xiaoyu.sun@lis-lab.fr>
\\\\
\large\textbf{Abstract}
\normalsize
\\
With the growing interest in quantum machine learning, the perceptron, a fundamental building block in traditional machine learning, has emerged as a valuable model for exploring the potential of quantum algorithms. In this work, we make two principal contributions. First, we revisit the \emph{quantum version space perceptron} algorithm proposed by \cite{kapoor2016quantum}, by identifying and correcting a flawed complexity assumption. We show that the query complexity of the algorithm is dimension-dependent, which has significant implications for its behaviour in high-dimensional regimes under worst-case scenarios.

Second, we propose and analyse two \emph{quantum-enhanced} cutting-plane algorithms for perceptron learning. Specifically, we leverage established quantum subroutines such as \emph{Grover's search} and \emph{quantum walk search}, and provide detailed algorithmic constructions together with query and arithmetic complexity analyses.
Our results establish improved complexity bounds under an idealised implementation framework and noise-free quantum computational models, offering insights into the trade-offs between margin dependence, dimensional dependence, and quantum resources. These findings provide a refined understanding of quantum perceptron models and their theoretical computational complexity properties.
\\\\
\small\textbf{Keywords:} Quantum machine learning, Linear classification,  Perceptron learning, Grover's search,  Quantum walk search
\\\\
\large\textbf{Statements and Declarations}
\normalsize
\begin{itemize}
\item Competing interests:
The authors have no competing interests to declare that are relevant to the content of this article.
\end{itemize}
\vspace{0.2cm}
\large\textbf{Acknowledgements}
\normalsize
\\
This work is supported by the Plan France 2030 through the PEPR integrated project EPiQ ANR-22-PETQ-0007; by the ANR JCJC DisQC ANR-22-CE47-0002-01; by the QuanTEdu-France ANR-22-CMAS-0001, and as part of the Initiative d’Excellence d’Aix-Marseille Université—A*MIDEX AMX-21-RID-011.
\\\\

\end{titlepage}

\section{Introduction}\label{sec1}

With the great development of computational power and algorithms, machine learning has gained more and more attention in the modern stage, with the increasing demand to learn from \emph{big data} of high feature space dimension $D$, which greatly motivates the use of quantum computing in machine learning. This has led to the emergence of the interdisciplinary field of \emph{quantum machine learning} ~\citep{biamonte2017quantum,schuld2015introduction}. 

Originating from the suggestion of \cite{feynman1982simulating} to use quantum hardware to simulate quantum dynamics with exponential overhead, quantum computing integrates knowledge from quantum algorithms, quantum information processing and experimental quantum systems. One of its central features is the superposition principle, which allows quantum states to represent high-dimensional data compactly. Additionally, quantum computing supports inherent parallelism, enabling the use of quantum linear algebra subroutines such as the quantum Fourier transform~\citep{nielsen2010quantum} and the HHL algorithm~\citep{harrow2009quantum} that operate in $\mathrm{polylog} (D)$ time, or some others subroutines with quadratic speedup, such as Grover’s search ~\citep{grover1996fast}  and its extension quantum counting algorithm~\citep{brassard1998quantum}.

Despite these theoretical advantages, several practical challenges remain. Efficient data encoding into quantum states, especially via amplitude encoding, remains an open problem, as suggested by \cite{aaronson2015read}, particularly in the absence of scalable quantum random access memory (QRAM). Moreover, many quantum linear algebra routines are highly sensitive to noise and are currently limited to specific physical platforms. Nevertheless, the exploration of quantum advantages remains theoretically valuable, as it may shed light on the boundaries of classical complexity. Moreover, from a practical standpoint, early experimental results and quantum circuit simulations have already demonstrated promising quantum advantages in some machine-learning tasks~\citep{schuld2020circuit,saggio2021experimental}.

\subsection{Motivation and Contributions}
In supervised machine learning, classification requires algorithms to give the correct label to unseen (test) data based on knowledge recognised from labelled training data.  
The simplest task in classification is likely binary classification on linearly separable data, which can be solved by the \emph{online perceptron algorithm}~\citep{rosenblatt1958perceptron}, with a tight update bound that only depends on the geometric margin of data, proved by \cite{novikoff1962convergence,block1962perceptron}. 
Although other classification methods, such as \emph{support vector machine} (SVM), which relies on advanced optimisation techniques to find optimal separating hyperplanes, the online perceptron algorithm uses a simple, iterative update rule and serves as the basic building unit of \emph{multilayer perceptron} (MLP), i.e. the single-layer perceptron, which is also a minimal \emph{feedforward neural network} (FNN). 
Due to its elementariness and simplicity, it is considered an ideal theoretical lens to study statistical and computational trade-offs of quantum machine-learning algorithms, as pointed out by \cite{kapoor2016quantum}. However, prior quantum perceptron models, particularly the quantum version space perceptron~(QVSP) algorithm introduced by \cite{kapoor2016quantum}, rest on an assumption about margin-based sampling (in their proof of Theorem~2), which is theoretically flawed in higher feature space dimensions $D>2$. This raises the critical question of exploring high-dimensional available quantum perceptron algorithms.

In this work, we make two principal contributions. First, we revisit the QVSP algorithm proposed by \cite{kapoor2016quantum} and prove that the probability of sampling a $D$-dimensional hyperplane from a standard normal distribution that perfectly classifies the data scales as $\Omega(\gamma^D)$ in the worst case, rather than $\Theta(\gamma)$, where $\gamma$ is the geometric margin. This correction is based on a critical assumption of \cite{kapoor2016quantum} and extends to subsequent works in the quantum perceptron literature~\citep{roget2022quantum,liao2024quadratic}. This finding prompts a question about whether quantum computing can address margin dependency and what quantum perceptron algorithms should be applied in high-dimensional feature spaces.

Our second contribution answers this question by proposing two {quantum-enhanced} cutting-plane (CP) algorithms for perceptron learning. Specifically, a \emph{hybrid} quantum-classical \emph{cutting-plane random walk} (HCP-RW)  and a \emph{fully-quantum} 
\emph{cutting-plane quantum walk} (QCP-QW). Compared to the classical {online perceptron algorithm}~\citep{rosenblatt1958perceptron}, these algorithms exhibit improved margin dependence arising from linear programs (LPs), which enjoys $O^*(D\log {(1/\gamma)})$\footnote{We use the asterisk notation $O^*(\cdot)$ to suppress the error parameter and factors like $\log(D)$, $\log\log(1/\gamma)$ and $\log\log(N)$ throughout the paper.} number of updates. Meanwhile, within idealised oracle-based quantum models, they also yield a sublinear dependence of $O(\sqrt{N})$ in the number of training examples, analogous to the \emph{online quantum perceptron} (OQP) algorithm~\citep{kapoor2016quantum}. Furthermore, the QCP-QW algorithm provides improved complexity scaling by $O^*(D^{1.5})$ speedup in both the number of queries\footnote{Throughout the paper, the terms classical and quantum query, as well as query complexity, follow the definitions given in Section~\ref{data_query}.} and arithmetic operations compared with the HCP-RW.  
Together, the two contributions provide a refined understanding of the behaviour of earlier quantum perceptron models and introduce new quantum-enhanced perceptron learning algorithms grounded in convex optimisation.

While both the HCP-RW and QCP-QW algorithms focus on theoretical computational complexity analysis, we acknowledge that, the reported speedups should be viewed as characterising the asymptotic potential of quantum perceptron models under idealised setup\footnote{Note that throughout the paper, we phrase the terms of "query complexity", "computational complexity" and "arithmetic complexity" etc., as the complexities under idealised, noise-free quantum computational model, which should not be interpreted as real-device performances.} rather than guarantees of real-time performance, as the implementation on near-term noisy intermediate-scale quantum (NISQ) or early fault-tolerant devices requires error correction or mitigation schemes, which are beyond the scope of the present work. Therefore, we provide detailed pseudocode, complexity analyses, and algorithmic flow for QCP-QW and defer simulations, small-scale tests, and fault-tolerant schemes to future work.

\subsection{Outline}
The rest of our paper is organised as follows. In Section~\ref{related_work}, we first review the connections between \emph{perceptron learning}, \emph{version space} and \emph{feasibility problem}. Then we re-examine some known quantum perceptron models~\citep{kapoor2016quantum,roget2022quantum,liao2024quadratic} and provide the formal definitions of the \emph{classical/quantum queries} related to this work. Section~\ref{MC_sampling} corrects an error in the QVSP algorithm and provides a simulation validating this correction. In Section~\ref{Q_enh},  we deliberate on quantum-enhanced perceptron learning algorithms by first explaining the \emph{quantum weak online learning} framework, and then explicating two CP-based algorithms. The quantum-classical hybrid algorithm HCP-RW, and a fully-quantum algorithm QCP-QW that improved on the HCP-RW algorithm, elaborating on how to use {quantum walk search}~\citep{magniez2007search} to speed up sampling. Section~\ref{conclu}  concludes the paper and points out future directions.

\section{Background and Preliminaries}\label{related_work}
\subsection{Perceptron, Version Space and Feasibility Problem}\label{relations}
Here we follow a deterministic setup (in comparison with the stochastic setup) where the labels $y_i=f(\bm x_i)$ are uniquely determined by some measurable function $f$ with probability one. A sample $\mathcal{S} = (\bm x_1, \ldots, \bm x_N  ) \sim \mathcal{D}^N$ is a sequence of instances drawn i.i.d. according to a distribution $\mathcal{D}$ and  we use the notation $\{(\bm x_i,y_i)_{i\in[N]}\}$ to refer to a sequence of $N$ pairs of training dataset $\{(\bm x_1,y_1),\ldots,(\bm x_N,y_N)\}$. A hyperplane $\bm w'^T\bm x +b=0$ in the $ D$-dimensional space can be defined by a normal vector $\bm w'\in \mathbb{R}^D$ and an intercept $b$. Here, we only consider the case where $b=0$, which can easily be achieved by redefining $\bm w^T=(\bm w'^T,b)$ in the enlarged $D+1$ dimensional space. In the separable case, there exists a hyperplane $\bm w$ separating two classes of data with their correspondence labels perfectly, i.e., $\bm w^T \bm x_i y_i >0$, $\forall i\in [N]$.\footnote{In here and the rest of the paper, we refer to a hyperplane $\bm w^T\bm x=0$ defined by its normal vector $\bm w$ as a hyperplane $\bm w$ for short.}   Moreover, when there is a \emph{maximal margin hyperplane} $\bm u$ separating the two classes with a maximal \emph{geometric margin} $\gamma>0$ such that $\bm u^T\bm x_i y_i \geq \gamma$ for all $i\in [N]$, the vector $\bm u$ is known as the SVM solution.

Now, given $N$ normalized linearly separable training examples $ \{\bm x_1, ., \bm x_N  \} \in \mathbb{R}^D$, $||\bm x_i||= 1$,\footnote{We use $||\cdot||$ to represent the $\ell_2$ norm throughout the paper.}  with corresponding labels $ \{y_1, . ., y_N  \}, y_i \in\{+1,-1\}$. The \emph{perceptron itself} is a weight vector (hyperplane) $\bm w$  that follows a prediction rule $\hat{y}_i=\text{sgn}(\bm w^T \bm x_i)$ for every unlabelled instance $\bm x_i$.\footnote{We take the \text{sgn} function such that $\text{sgn}(x)=1$, if $x> 0$; and $\text{sgn}(x)=-1$, if $x\leq0$.}
\emph{Perceptron learning} aims to find a hyperplane $\bm w ^T \bm x_i  y_i >0$, $\forall i \in [N]$, such that all training examples are classified correctly.
The online perceptron algorithm~\citep{rosenblatt1958perceptron} is a foundational online learning method introduced in the early 1960s. At each update step, the model maintains a weight vector $\bm{w}_t$ and updates it as $\bm{w}_{t+1} \leftarrow \bm{w}_t + y_i' \bm{x}_i'$, where $(\bm{x}_i', y_i')$ is a training example misclassified by $\bm{w}_t$. 

It is well known that if the training dataset is linearly separable with margin $\gamma$, the online perceptron algorithm converges after at most $O(1/\gamma^2)$ updates, resulting in a linear classifier that correctly classifies all $N$ training points~\citep{novikoff1962convergence}. Geometrically, the solution lies in the version space, defined as $\mathcal{VS}:=\{\bm w \mid \bm w ^T \bm x_i  y_i >0, \forall i \in [N]\}$. As only directions of $\bm w$ matter for classification, we can limit the length of $||\bm w||\leq 1$, the $\mathcal{VS}$
is then a bounded polyhedron (a polytope) satisfying $N$ half-space constraints, also known as a convex \emph{feasible set} in the context of linear programming.  Notably, the solution to the hard-margin SVM problem corresponds to the centre of the largest inscribed sphere within $\mathcal{VS}$ as pointed out by \cite{tong2001support}. 

Notice that the solution of a classical online perceptron algorithm converges to an arbitrary point $\bm w_r\in \mathcal{VS}$, so the problem of perceptron learning can be reinterpreted as a feasibility  problem through the lens of linear programming~\citep{grotschel2012geometric,nemirovski1994efficient}.
Given this, we point out that general feasibility algorithms can also serve as primitive approaches to address the perceptron learning problem with high-dimensional data, such as the ellipsoid method~\citep{khachiyan1979polynomial,yang2009online}, and various cutting-plane methods~\citep{vaidya1996new,atkinson1995cutting,bertsimas2004solving,louche2015cutting,lee2015faster}. A detailed summary can be found in a recent paper of \cite[Table 1]{jiang2020improved}.

\subsection{Quantum Perceptron Models}\label{data_query}

Motivated by the classical perceptron algorithm, several quantum variants have been proposed to accelerate perceptron training, improve statistical efficiency, and enhance generalisation performance~\citep{kapoor2016quantum,roget2022quantum,liao2024quadratic}. 

In all the perceptron models, a \emph{classical query} on a training data pair $(\bm x'_i, y'_i)$  refers to an evaluation of a Boolean function $f_{\bm w}$, where 
\begin{align}\label{Boolean}
    f_{\bm w}(\bm x'_i y'_i)=1 \iff \bm w^T\bm x'_i y'_i\leq0  
      & \text{ (misclassified}),
\end{align}
acting on one data pair $(\bm x'_i,y'_i)$ each time. 

A \emph{quantum query} on a training data $\ket{\bm{x}'_i y'_i}$ refers to an evaluation of a quantum oracle (a unitary operation) $F_{\bm{w}}$  which acts coherently on a superposition of state vectors,
 \begin{equation}\label{online_q}
 F_{\bm w}\ket{\bm x'_iy'_i}=(-1)^{f_{\bm w} ( \bm x'_i y'_i  )}\ket{\bm x'_iy'_i},
\end{equation}
or equivalently,
\begin{equation*}
    F\ket{\bm w}|\bm x'_iy'_i\rangle \ket{0}=\ket{\bm w}|\bm x'_iy'_i\rangle\ket{0\oplus {f_{\bm w} (\bm x'_iy'_i)}}.
\end{equation*}

The critical assumption for performing such quantum queries is the availability of an efficient quantum data access scheme, as discussed in detail by \cite{kapoor2016quantum}. The scheme is commonly modelled as a quantum oracle, enabling coherent loading of a superposition over training examples.

In the OQP algorithm, assume there is an efficient quantum data access scheme, and each training pair can be encoded as a quantum state of the form $\ket{{x}_i, y_i}$. Compared to the classical online perceptron algorithm, the OQP algorithm reduces the number of classical queries from $O^*( {N} /{\gamma^2} )$ in the classical setting to $O^*( {\sqrt{N} \log(1/\gamma^2)}/{\gamma^2} )$ quantum queries by using Grover’s search to locate misclassified examples efficiently.
However, the number of perceptron updates remains $O^*(1/\gamma^2)$, the same as in the classical case. As a result, this algorithm can still be limited by poor convergence behaviour when the dataset has a very small margin $\gamma$. 

To address this issue, \cite{kapoor2016quantum} proposed the QVSP algorithm, where Grover's search is used as an inner loop to amplify the probability of finding a hyperplane in version space to at least $1/4$, and an outer loop of repeated sampling ensures the success probability is close to one. Argued by \cite{kapoor2016quantum}, this method improves the overall query complexity from $O^*(N/\gamma)$ to $O^*(N/\sqrt{\gamma})$, offering faster convergence for small-margin data.\footnote{However, we show in the next section that this argument does not hold.} Being a fully-quantum algorithm, in QVSP, a unitary operator $ G_{\mathcal{S}}$ is defined to act on an arbitrary hyperplane $\ket{\bm w}$ of the parameter space, 
 \begin{equation}\label{version_q}
     G_{\mathcal{S}}\ket{\bm w}=(-1)^{1+ (f_{\bm w}(\bm x_1,y_1)\lor f_{\bm w}(\bm x_2,y_2)\lor \cdots\lor f_{\bm w_j}(\bm x_N,y_N)  )}\ket{\bm w},
 \end{equation}
which can be achieved by using $O^*(N)$ classical queries that  a negative phase is added when $\bm w$ classifies all examples correctly.

In the subsequent works, \cite{roget2022quantum} tested the computational complexity of these two models and proposed a hybrid model with  $O^* (\sqrt{N}\log({1}/{\gamma})/\gamma )$  quantum queries and better generalisation performance;
an improved-QVSP algorithm was proposed by \cite{liao2024quadratic}, based on a quantum counting subroutine to realize the unitary operator of Eqn.\@\eqref{version_q} with $O^*(\sqrt{N})$ quantum queries, reducing the total query complexity from $O^* (N/\sqrt{\gamma}  )$ to $O^* (\sqrt{N/\gamma}  )$. However, these two papers  
are both based on the statistical argument of \cite{kapoor2016quantum} that the probability of sampling a vector $\sim\mathcal{N}(\bm 0,\mathbb{1})$ in $\mathbb{R}^D$ that lies in the version space is $\Theta(\gamma)$. In the next section, we show that this argument is not exact.

\section{Corrected Analysis of Version Space Sampling} \label{MC_sampling}
\subsection{Review of Prior Analysis}
In the QVSP algorithm proposed by \cite{kapoor2016quantum}, it is stated that the probability of sampling according to the normal distribution in $\mathbb{R}^D$ for a perfect hyperplane scales as $\Theta(\gamma)$. Then, Grover's search is used as a substitution for the Monte-Carlo sampling and provides a square-root speedup over the margin. However, in this section,  we show that this result is not exact and we give the correction in Theorem~\ref{theorem_3}. 

For the maximum margin separator $\bm u$ and a point $\bm{x}_k$ on the margin, one has $y_k \bm{u}^T  \bm {x}_k=\gamma$. When a perturbed vector $\bm w$ close to $\bm u$ that also leads to perfect classification, one has $0=\cos{(\pi/2)}<y_k \bm{w}^T \bm {x}_k=l_k<\cos{(\pi/2-2\arcsin(\gamma))}\leq 2\gamma$, $0<\gamma<1$. As $\gamma \rightarrow 0$, it can be approximated by $0< l_k< 2\gamma$. In the proof of \cite{kapoor2016quantum}, the condition $\{\bm w\mid 0<l_k<2\gamma\} $  is perceived as an equivalent for $\{\bm w \mid \bm w \in \mathcal{VS}\}$. However, we point out that this is incorrect
as the former is merely a necessary condition for the latter. By definition, $\{\bm w\mid \bm w \in \mathcal{VS}\} =\{\bm w\mid  y_i \bm w\cdot \bm x_i>0, \forall i\in [N]\}$, but  $ \{\bm w \mid  y_i \bm w\cdot \bm x_i>0, \forall i\in [N] \}\subsetneq \{\bm w\mid 0<l_k<2\gamma\} $. 
As a consequence, the statement of~\cite[Theorem 2]{kapoor2016quantum} that $\Pr[\bm w\in \mathcal{VS}]=
\Pr[ 0<l_k<2\gamma ]=\Theta(\gamma) 
$ is not true.

\subsection{Corrected Theorem and Proof} \label{theoremand proof}

\begin{lemma}\label{lemma_1}
The sufficient condition for $\bm w\in \mathcal{VS}$ in $ \mathbb{R}^D$ is that the angular distance between 
the hyperplane $\bm w$ and the maximal margin separator $\bm u $ satisfies $\measuredangle (\bm{w},\bm u)< \arcsin\gamma$.
\end{lemma}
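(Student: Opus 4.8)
The plan is to work on the unit sphere and reduce the statement to the triangle inequality for the geodesic (angular) distance, which is exactly what makes the threshold dimension-free. First I would normalize, taking $\|\bm u\| = \|\bm w\| = 1$; this is harmless since only the directions of the hyperplanes matter for classification. It is convenient to absorb the labels into the data by setting $\bm z_i := y_i \bm x_i$, so that $\|\bm z_i\| = 1$ and the version-space condition $y_i \bm w^T \bm x_i > 0$ reads simply $\bm w^T \bm z_i > 0$, equivalently $\measuredangle(\bm w, \bm z_i) < \pi/2$ for every $i \in [N]$.

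Second, I would recast the margin hypothesis in angular form. Since $\bm u$ and $\bm z_i$ are unit vectors and $\bm u^T \bm z_i = y_i \bm u^T \bm x_i \geq \gamma$, we get $\cos \measuredangle(\bm u, \bm z_i) \geq \gamma$, hence
\[ \measuredangle(\bm u, \bm z_i) \leq \arccos\gamma = \frac{\pi}{2} - \arcsin\gamma, \qquad \forall i \in [N]. \]
Thus $\bm u$ lies within angular distance $\tfrac{\pi}{2} - \arcsin\gamma$ of every signed example, the worst case being attained on the support vectors.

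Third --- the core of the argument --- I would invoke that the angular distance $\measuredangle(\bm a, \bm b) = \arccos(\bm a^T \bm b)$ is a metric on the unit sphere of $\mathbb{R}^D$ and therefore satisfies the triangle inequality $\measuredangle(\bm w, \bm z_i) \leq \measuredangle(\bm w, \bm u) + \measuredangle(\bm u, \bm z_i)$. Feeding in the hypothesis $\measuredangle(\bm w, \bm u) < \arcsin\gamma$ together with the margin bound yields
\[ \measuredangle(\bm w, \bm z_i) < \arcsin\gamma + \left( \frac{\pi}{2} - \arcsin\gamma \right) = \frac{\pi}{2}, \]
so $\bm w^T \bm z_i > 0$, i.e.\ $y_i \bm w^T \bm x_i > 0$, for all $i$; this is precisely $\bm w \in \mathcal{VS}$.

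The step deserving the most care is the triangle inequality for the angular distance: it is what lets us add the ``closeness budget'' $\arcsin\gamma$ to the ``margin slack'' $\tfrac{\pi}{2} - \arcsin\gamma$ and land exactly at the classification threshold $\tfrac{\pi}{2}$, independently of $D$. I would either cite the standard fact that the geodesic distance on the sphere is a metric or give a short self-contained derivation via the spherical law of cosines, and I would track strictness throughout so that the conclusion is the open inequality defining $\mathcal{VS}$. Incidentally, the same computation shows $\arcsin\gamma$ is the tightest such radius, since equality in the triangle inequality is approached when $\bm z_i$ is a support vector and $\bm w, \bm u, \bm z_i$ are coplanar.
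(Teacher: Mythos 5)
Your proof is correct and follows essentially the same route as the paper's: translate the margin condition into the angular bound $\measuredangle(\bm u, y_i\bm x_i) \leq \tfrac{\pi}{2} - \arcsin\gamma$, apply the triangle inequality for angular distance to get $\measuredangle(\bm w, y_i\bm x_i) < \tfrac{\pi}{2}$, and conclude $y_i\bm w^T\bm x_i > 0$ for all $i$. Your version is somewhat more explicit about normalization and the metric property of geodesic distance, but the argument is the same.
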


\begin{proof}
By the triangular inequality on angular distance, it holds that
\begin{align*}
\measuredangle (\bm{w},y_i\bm{x_i}) &\leq \measuredangle (\bm{w},\bm u)+ \measuredangle (\bm u, y_i\bm{x_i}) \leq \measuredangle (\bm{w},\bm u+ \frac{\pi}{2}-\arcsin\gamma)\\
&< \arcsin\gamma + \frac{\pi}{2}-\arcsin\gamma = \frac{\pi}{2}.
\end{align*}
Therefore $\bm{w}^T \bm{x_i}y_i = ||\bm w|| \cos \measuredangle (\bm{w}, y_i\bm{x_i})> 0 $ for all $i \in[N]$, indicating $\bm{w}\in \mathcal{VS}$.  
\end{proof} 
\begin{lemma}\label{lemma_2}
The probability that sampling from  $\mathcal{N}(\bm 0,\mathbb{1})$  for a $D$-dimensional vector $ \bm{w}$ that satisfies
$\measuredangle (\bm{w},\bm u) < \arcsin\gamma $ is $ \frac{1}{2} \mathrm{I}_{\gamma^2}(\frac{D-1}{2},\frac{1}{2})$, where $\mathrm{I}_{x} (a, b)$ is the regularized incomplete beta function.
\end{lemma}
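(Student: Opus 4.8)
The plan is to exploit the rotational symmetry of the standard Gaussian. Since $\bm w \sim \mathcal{N}(\bm 0, \mathbb{1})$ has a density that depends only on $\|\bm w\|$, the normalized direction $\bm w / \|\bm w\|$ is distributed uniformly on the unit sphere $S^{D-1}$ and is independent of the radius. Because the event $\measuredangle \bm w, \bm u < \arcsin \gamma$ depends only on the direction of $\bm w$, the radial part integrates out and the problem reduces to computing a uniform-measure probability on $S^{D-1}$ relative to the fixed unit vector $\bm u$. First I would make this reduction explicit, so that the target probability becomes the fraction of the sphere lying within polar angle $\arcsin\gamma$ of $\bm u$.

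Next I would use the fact that, for the uniform measure on $S^{D-1}$, the polar angle $\theta = \measuredangle \bm w, \bm u$ has density proportional to $\sin^{D-2}\theta$ on $[0,\pi]$; this is the standard co-area factor obtained by slicing the sphere into $(D-2)$-spheres of radius $\sin\theta$. Writing $P(\theta < \arcsin\gamma)$ as the ratio
\begin{equation*}
P(\theta < \arcsin\gamma) = \frac{\int_0^{\arcsin\gamma}\sin^{D-2}\theta\,d\theta}{\int_0^{\pi}\sin^{D-2}\theta\,d\theta},
\end{equation*}
reduces the lemma to evaluating two one-dimensional integrals. I would record that, since $0<\gamma<1$, the upper limit $\arcsin\gamma$ lies in $[0,\pi/2]$, which will matter for the appearance of the factor $\tfrac12$.

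The computational heart of the argument is the substitution $t = \sin^2\theta$, under which $d\theta = \tfrac12 t^{-1/2}(1-t)^{-1/2}\,dt$ and $\sin^{D-2}\theta = t^{(D-2)/2}$. The numerator then becomes $\tfrac12\int_0^{\gamma^2} t^{\frac{D-1}{2}-1}(1-t)^{\frac{1}{2}-1}\,dt = \tfrac12\, B\!\left(\gamma^2;\tfrac{D-1}{2},\tfrac12\right)$, an incomplete beta integral with the limit $\gamma^2 = \sin^2(\arcsin\gamma)$. For the denominator I would use the symmetry of $\sin^{D-2}\theta$ about $\theta=\pi/2$ to write $\int_0^\pi = 2\int_0^{\pi/2}$, and the same substitution then gives $\int_0^\pi \sin^{D-2}\theta\,d\theta = B\!\left(\tfrac{D-1}{2},\tfrac12\right)$. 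Dividing, the two beta functions combine into the regularized incomplete beta function and leave exactly the claimed factor, yielding $\tfrac12\, \mathrm{I}_{\gamma^2}\!\left(\tfrac{D-1}{2},\tfrac12\right)$.

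I do not expect any deep obstacle here; the argument is essentially a change-of-variables once the uniform-on-sphere reduction is in place. The one place to be careful is the bookkeeping that produces the factor $\tfrac12$: it arises because the normalizing integral runs over the full range $[0,\pi]$ while the region of interest sits inside the hemisphere $[0,\pi/2]$, so the denominator picks up a factor of two that the numerator does not. Getting the exponents $\tfrac{D-1}{2}-1$ and $\tfrac12-1$ to line up with the beta-function parameters, and correctly tracking the $t^{-1/2}(1-t)^{-1/2}$ Jacobian, is the only step where a sign or offset error could creep in.
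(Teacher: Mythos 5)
Your proof is correct and follows essentially the same route as the paper's: both reduce the Gaussian probability, by isotropy, to the fraction of the unit sphere (equivalently, of the unit ball $B_1$) occupied by the cone of half-angle $\arcsin\gamma$ about $\bm u$. The only difference is that the paper quotes the closed-form hyperspherical-sector volume $\tfrac{1}{2} V(B_1)\,\mathrm{I}_{\gamma^2}(\tfrac{D-1}{2},\tfrac{1}{2})$ from a reference, whereas you derive that formula from scratch via the $\sin^{D-2}\theta$ polar-angle density and the substitution $t=\sin^2\theta$, with the factor $\tfrac{1}{2}$ and the beta-function parameters all tracked correctly.
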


\begin{proof}
The set $\{\bm{w}\mid \measuredangle (\bm{w},\bm u)< \arcsin\gamma\} $ uniquely determine a hyperspherical sector in a $D$-dimensional unit ball $B_1$,\footnote{We use $B_R$ to refer to a $D-$dimensional ball with radius $R$ and its centre at origin.} which has colatitude angle $\arcsin{\gamma}$ and its volume is given by $V^{sector}(\arcsin\gamma)=\frac{1}{2} V(B_1)\mathrm{I}_{\gamma^2 } (\frac{D-1}{2}, \frac{1}{2} )$~\citep{li2010concise}. Since the multivariate normal distribution is isotropic, the probability of sampling a vector $\bm w$ from $\mathcal{N}(\bm 0,\mathbb{1})$ or $\mathcal{U}_{B_1}$ \footnote{ We use $\mathcal{U}_{B_1}$ to denote a uniform distribution inside of $B_1$ with probability density function defined by $f(\bm x)=1/V(B_1)$, $\forall \bm x\in B_1$. } that falls into the hyperspherical sector are the same. Consequently,
\begin{equation*}
\begin{aligned}
    \Pr [\{\bm{w}|\measuredangle (\bm{w},\bm u)< \arcsin\gamma\}  ]&=\frac{V^{sector}(\arcsin{\gamma})}{V(B_1)}= \frac{1}{2}\mathrm{I}_{\gamma^2}(\frac{D-1}{2},\frac{1}{2}).
\end{aligned}
\end{equation*}  
\end{proof}

\begin{theorem}\label{theorem_3}
Given a training dataset $\{(\bm x_i,y_i)_{i\in[N]}\}$ separated by a maximal geometric margin of $\gamma$ in $\mathbb{R}^D$, the probability of sampling a $D$-dimensional vector $\bm w\sim\mathcal{N}(\bm 0,\mathbb{1})$ which perfectly separates the data can be geometrically defined as the quotient of the volumes of version space and the $D$-dimensional unit ball $B_1$, such that

$$
\Pr[\bm w\in \mathcal{VS}]=\frac{V(\mathcal{VS})}{V(B_1)}=
\begin{cases}
\operatorname\Omega(\gamma^{D}),  & \text{when  $\gamma  \rightarrow 0$, $D\geq 2$ is a constant;} \\
\operatorname\Omega( \frac{\gamma^{D}}{\sqrt{ D}}), & \text{when $\gamma  \rightarrow 0$  and $D  \rightarrow \infty$.}
  \end{cases}
$$
\end{theorem}
\begin{proof}
We note that only directions of $\bm w$ determine classification results, and since the normal distribution $\mathcal{N}(\bm 0,\mathbb{1})$ is isotropic, the probabilities of sampling a vector $\bm w $ from $\mathcal{N}(\bm 0,\mathbb{1})$ or from $\mathcal{U}_{B_1}$ that it falls in the version space are the same. As a result, one has $\Pr[\bm w\in \mathcal{VS}]=\frac{V(\mathcal{VS})}{V(B_1)}$.
From Lemma~\ref{lemma_1}, we have $\{\bm{w}\mid\measuredangle (\bm{w},\bm u)< \arcsin\gamma\}  \subseteq  \{\bm w \mid\bm w\in \mathcal{VS}\}$, so the relationship of their probabilities satisfies $ \Pr[\bm w\in \mathcal{VS}] \geq \Pr [ \{\bm{w}\mid\measuredangle (\bm{w},\bm u)< \arcsin\gamma\}]$. By asymptotic approximation of the incomplete beta function of Lemma~\ref{lemma_2} (detailed analysis is provided in Appendix \ref{Appen_A}), when $D\geq 2$ is a constant,
\begin{equation} \label{gamma_asymp}
    \Pr [\{\bm{w}|\measuredangle (\bm{w},\bm u)< \arcsin\gamma\}  ] \sim\frac{\gamma^{D-1}}{(D-1)}/ {\mathrm{B}(\frac{D-1}{2},\frac{1}{2})} \geq \frac{\gamma^{D-1}}{\pi (D-1)}.
\end{equation}
And in the limit of  $D \rightarrow\infty$, it yields
\begin{equation} \label{d_gamma_asymp}
    \begin{aligned}
    &\Pr [\{\bm{w}|\measuredangle (\bm{w},\bm u)< \arcsin\gamma\}  ] \sim\frac{\gamma^{D-1}}{(D-1)}/ \sqrt{\frac{2\pi}{{D-1}}} 
    \sim \frac{\gamma^{D-1}}{\sqrt{2\pi (D-1)}}.
    \end{aligned}
\end{equation}
Hence, the result of Theorem~\ref{theorem_3} follows.
\end{proof}

Classically, a straightforward Monte-Carlo sampling method can be applied, such that samples $K$ hyperplanes according to $\mathcal{N}(\bm 0,\mathbb{1})$ with a failure possibility at most some $\delta$, would require
\begin{equation*}
(1-p)^K \leq \exp (-pK)\leq \delta.
\end{equation*}
Therefore, the number of samples $K= \lceil \frac{1}{p} \ln (1 / \delta)  \rceil=O (\frac{1}{\gamma^{D}}\log (1 / \delta) )$ grows exponentially with the data dimension in the limit of small margin. This exponentially large number of samples can be explained by a well-known phenomenon in sampling, dubbed the "\emph{curse of dimensionality}", describing an exponential increase in volume when adding an extra dimension. The same problem regarding exponentially small version space has also been addressed by \cite{fine2002query,gonen2013efficient}. Substituting the Monte-Carlo sampling by a Grover's search, the query complexity of the QVSP algorithm should scale as $O^*(N \sqrt{K}) = O^*(N/\sqrt{{\gamma}^{D}})$ in the worst case, instead of $O^*(N/\sqrt{{\gamma}} )$ as stated in previous result. Notwithstanding with a quantum counting subroutine~\citep{liao2024quadratic}, it scales as $O^*(\sqrt{N/\gamma^D})$. This is a negative result that suggests that the QVSP algorithm can be ineffective when applied to high-dimensional data with small margins.

\subsection{Simulation of Margin Decay}

We present in Figure~\ref{hard_data} a simulation result on an artificially designed hard dataset~\citep{mohri2018foundations,roget2022quantum} that has an exponentially small margin with data dimension.\footnote{This artificially designed dataset is  from~\cite[Chapter 8]{mohri2018foundations}, $\bm x_i'=(\underbrace{(-1)^i, \ldots,(-1)^i,(-1)^{i+1}}_{i \text { first components }}, 0, \ldots, 0)$, $y_i=(-1)^{i+1}$.}  It works as a concrete counterexample to show that the prior claims of $\Theta(\gamma)$ fail. 

\begin{figure}[h]
\centering
\includegraphics[width=1\textwidth]{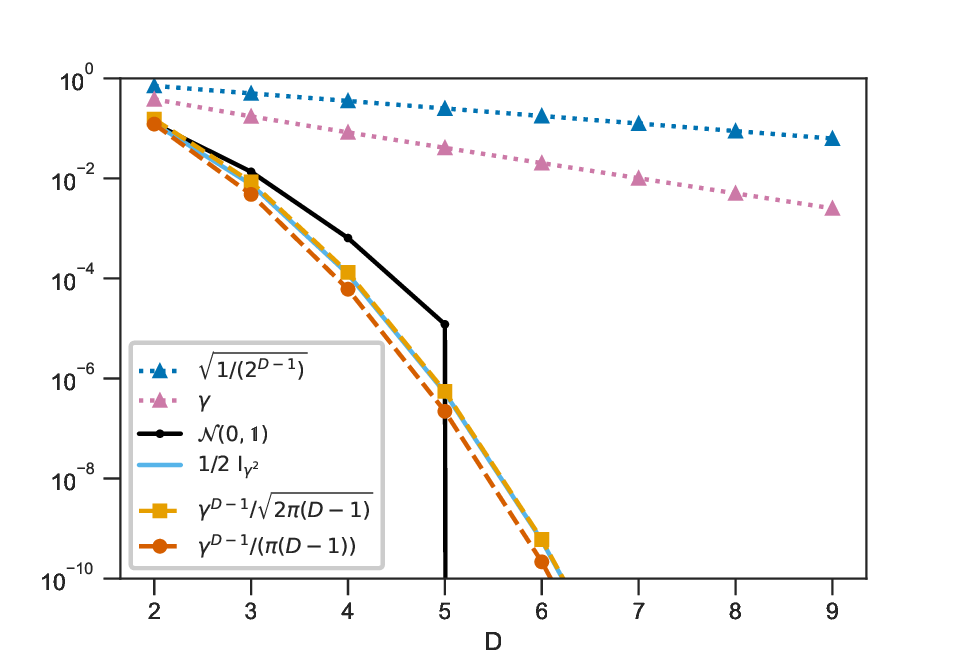}
\caption{The perfect sampling probability relationships (log scale) on a hard dataset from \cite{mohri2018foundations} (normalised norms). The black solid line denotes the perfect sampling probability; the sky blue solid line denotes the bound in  Lemma~\ref{lemma_2}; the vermilion dashed line and the orange dashed line represent the bounds from Eqn.\@\eqref{gamma_asymp} and 
Eqn.\@\eqref{d_gamma_asymp} respectively. The margin $\gamma$ (approximated by \textsc{sklearn.svm.LinearSVC}) is indicated by the reddish purple dotted line, and is upper bounded by $\sqrt{1/2^{D-1}}$ (blue dotted line)
}\label{hard_data}
\end{figure} 

As we can see from it, the probability of sampling a $\bm w \sim \mathcal{N}(0, \mathbb{1})$ that lies in the version space (black solid line) is lower bounded by the regularised incomplete beta function given in Lemma~\ref{lemma_2} (sky blue solid line). When $D\geq 2$ is a constant, $\Pr[\bm w\in \mathcal{VS}]$ is lower bounded by $\frac{\gamma^{D-1}}{\pi (D-1)}$ (vermilion dashed line) as shown in Eqn.\@\eqref{gamma_asymp}, and when $D  \rightarrow\infty$, it is asymptotically lower bounded by $
\frac{\gamma^{D-1}}{\sqrt{2\pi (D-1)}}$~(orange dashed line), as shown in Eqn.\@\eqref{d_gamma_asymp}.
It confirms our result of Theorem~\ref{theorem_3}, that the probability of sampling a perfect classifier from a standard normal distribution is lower bounded by $\Omega(\gamma^D)$ in the worst case. 

\paragraph{Discussion of the version space sampling method}

It is worth noting that the lower bound of $\Omega(\gamma^D)$ corresponds to a worst-case scenario, which could be too loose for certain datasets, where a small margin does not necessarily imply a small version space. Especially, the datasets with low-rank data matrices, which typically arise with small sample size, sparse data or small effective/intrinsic dimension. 

Particularly, one can represent a dataset by an $X\in \mathbb{R}^{N\times D}$ data matrix, where the $i^{th}$ row represents the $\bm x_iy_i$ vector, and the rank $r$ of the matrix characterises the dimension of the linear subspace spanned by the data points. When the rank is significantly smaller than the ambient feature dimension $r\ll D$, a substantially larger version space can manifest, such that the success probability $\Pr[\bm w\in \mathcal{VS}]$ may be significantly larger than the worst-case lower bound. More precisely,  we explicate this observation in the following Remark~\ref{remark4}.

\begin{remark}\label{remark4}
Let $X:=[\bm x_1 y_1,
\cdots,\bm x_N y_N]\in \mathbb{R}^{N \times D}$ be the data matrix, and let $r = \operatorname{rank}(X)$. 
Define the subspace 
\[
\mathcal S := \operatorname{span}\{\bm x_i y_i\}_{i=1}^N \subseteq \mathbb{R}^D.
\]
Since each separability constraint $y_i \langle \bm w, x_i \rangle > 0$ depends only on the projection of $\bm w$ onto $\mathcal S$, every $\bm w \in \mathbb{R}^D$ can be uniquely decomposed as
\[
\bm w = \bm w_{\parallel} + \bm w_{\perp},
\qquad
\bm w_{\parallel} \in \mathcal S,
\quad
\bm w_{\perp} \in \mathcal S^\perp.
\] 
Since  $ \bm x_i y_i \in \mathcal S$, $\forall i\in [N]$, we have $\langle \bm w_{\perp}, x_i y_i \rangle = 0$,
and therefore $\langle \bm w, x_i y_i \rangle
= \langle \bm w_{\parallel}, x_i y_i \rangle$, the separability is fully determined by $\bm w_{\parallel}$. 
Consequently, the version space admits the product decomposition
\[
\mathcal{VS}
=
\mathcal C \oplus \mathcal S^\perp,
\]
where $\mathcal C:=\{\bm u \in \mathcal S |
\langle \bm u, x_i y_i \rangle > 0, \ \forall i \}$
is a polyhedral cone in the $r$-dimensional subspace $\mathcal S$.

Therefore, the variable governing the small-margin behaviour of
$\Pr[\bm w \in \mathcal{VS}]$ is $r$, rather than the ambient dimension $D$. 
In particular, as $\gamma \to 0$, we can still use the same expression as in Theorem~\ref{theorem_3}, that
\[
\Pr[\bm w \in \mathcal{VS}]
=\frac{V(\mathcal{VS})}{V(B_1)}.
\]
However, as the volume of version space $V(\mathcal{VS})$ is restricted only by the subspace 
$\mathcal S$ with dimension $r\ll D$, while the orthogonal complement $\mathcal S^\perp$ contributes no additional constraints, it could scale much larger compared with the scenario that $r\sim D$. Thus, this indicates that the probability is likely $\Pr[\bm w \in \mathcal{VS}]= \Omega(\gamma^r)$, up to dimension-dependent constants.

Hence, following a similar analysis of the Monte-Carlo version space sampling method, as at the end of Section~\ref{theoremand proof}, the exponential dependence of $O^*(1/\gamma^D)$ should be interpreted as a  worst-case counterexample, which  could  be substantially mitigated  when the data matrix is low-rank. Particularly, we speculate that a much lower query complexity could be incurred for QVSP with low-rank datasets, scaling as $O^*(\sqrt{N/\gamma^r})$.
\end{remark}

Nevertheless, since our goal here is to demonstrate the corrected bound, we do not fully characterise the practical effectiveness of the version space sampling method.

\section{Quantum-Enhanced Perceptron Learning Algorithms}\label{Q_enh}

As shown in the last section, the use of QVSP can be severely limited in worst-case scenarios with small margins and high feature-space dimensions. In view of this, we investigate alternative quantum-enhanced approaches to address margin dependency and the practical usage of high-dimensional data. We might as well first investigate how the OQP algorithm~\citep{kapoor2016quantum} works successfully. 

Following a classical online perceptron framework, the model continuously refines its current position in the dual space after perceiving each misclassified pair. In the end, it converges to an arbitrary point $\bm w_r \in \mathcal{VS}$. However, in a traditional online learning setup, training examples are accessed incrementally in a stream fashion, which leads to a worst-case query complexity bound that scales as $O(N)$, from deterministically checking all $N$ training pairs. To provide a fair comparison with quantum query models and define analogous quantum-classical hybrid algorithms such as OQP, \cite{kapoor2016quantum} considered a \emph{weak online learning} framework.

\subsection{Quantum Weak Online Learning}
\label{weak_online}
The weak online learning approach follows the same methodology as online learning, where data are acquired incrementally from an unknown training dataset of unknown distribution. However, it differs in that it assumes each example is sampled uniformly at random from a fixed dataset $\{(\bm{x}_1, y_1), \ldots, (\bm{x}_N, y_N)\}$. Although this may resemble batch learning, we emphasise that the learner does not have access to the full dataset \emph{at once} nor have the \emph{whole knowledge} of the dataset like in a white-box model. Instead, interaction occurs through a black-box interface that returns a training example upon request. In the classical setting, this black-box is a uniform sampler: each example is returned with probability $1/N$, and repeated sampling of the same example is permitted. Consequently, to obtain a misclassified pair,  a classical uniform sampler requires repeated sampling and verifications, which induces $O(N)$ classical queries, see Eqn.\@\eqref{Boolean}. This relaxed weak online learning assumption naturally extends to a quantum weak online learning setting, where the black-box is a quantum Grover's search subroutine. Training examples are accessed in superposition, after applying $O(\sqrt{N})$ quantum queries, as in Eqn.\@\eqref{online_q}, a specific example (e.g., a misclassified one) can be retrieved (measured) with probability $O(1)$, which is the idea underpinning the OQP algorithm~\citep{kapoor2016quantum}. Below we provide an illustration flowchart of (quantum) weak online learning algorithms for the perceptron learning problem (see Figure~\ref{qweakonline}). It summarises the key idea behind the OQP algorithm.

\begin{figure}[ht!]
\centering
\tikzstyle{io} = [rectangle, rounded corners, 
minimum width=2.5cm, 
minimum height=0.6cm,
text centered, 
draw=black, 
fill=red!20,
font=\footnotesize]

\tikzstyle{process} = [rectangle,rounded corners,  
minimum width=2cm, 
minimum height=0.6cm,
text centered,
text width=2.5cm,
draw=black,
fill=orange!20,
font=\footnotesize]

\tikzstyle{box} = [rectangle, rounded corners, 
minimum width=5.5cm,
minimum height=0.6cm,
text centered,
draw=black, 
fill=blue!20,
font=\footnotesize]

\tikzstyle{arrow} = [thick,->,>=stealth]
\tikzstyle{decision} = [diamond, draw,
fill=green!10, 
text centered, 
aspect=5,
font=\footnotesize]

\begin{tikzpicture}[node distance=0.6cm and 0.8cm]
\node (start) [io] {\footnotesize Initialize model $\bm{w}_0$};
\node (box) [box, below=of start] {\footnotesize  Uniform sampler/Grover's search};
\node (check) [decision, below=1cm of box] {\footnotesize Is $(\bm{x'}_t, y'_t)$ misclassified?};
\node (update) [process, right=1cm of check] {\footnotesize Update:  $\bm{w}_{t+1}\leftarrow f(\bm{w}_t, ~y'_t \bm{x'}_t)$};
\node (stop) [ decision, below=of check] {\footnotesize Stopping condition met?};
\node (end) [io, below=of stop] {\footnotesize Output final $\bm{w}_r$};

\draw [arrow] (start) -- (box);
\draw [arrow] (box.south) -- node[left] {\small An example $(\bm{x'}_t, y'_t)$}  (check.north);

\draw [arrow] (check.east) -- node[above] {\small Yes} (update.west);
\draw [arrow] (update.south) |- (stop.east);
\draw [arrow] (check.south) -- node[left] {\small No} (stop.north);
\draw [arrow] (stop.south) -- node[left] {\small Yes} (end.north);
\draw [arrow] (stop.west)-- node[above] {\small No} ++(-1.5,0)  |- (box.west);
\end{tikzpicture}
\caption{Flowchart of (quantum) weak online learning algorithm for perceptron learning: a (Grover's search subroutine) uniform sampler is used to sample a misclassified example pair, and the model is iteratively updated until convergence ($\bm w_r\in \mathcal{VS}$).
}
\label{qweakonline}
\end{figure}

In round $t$, a uniform sampler (Grover's search) returns a misclassified training example $(\bm{x}_t', y_t')$ with success probability at least $ 1-\delta$, where $\delta > 0$. The algorithm then checks whether the current hypothesis $\bm{w}_t$ misclassifies this example. If it does, the model is updated according to its corresponding rule, as $\bm w_{t+1}\leftarrow f(\bm{w}_t,~\bm{x}_t', y_t')$, e.g., $\bm{w}_{t+1} = \bm{w}_t + y_t' \bm{x}_t'$ for online (quantum) perceptron. If the returned example pair is classified correctly, the algorithm continues to query the black-box until a misclassified pair is found. The process repeats until a stopping condition is met at round $r$, at which point the final hypothesis $\bm{w}_r$ lies within the version space $\mathcal{VS}$. 

\subsection{Hybrid Cutting-Plane Random Walk Algorithm} \label{HCPRW}
As discussed in Section~\ref{relations},  a feasibility algorithm primitive can be a proxy for an algorithm solving the perceptron learning problem. In feasibility algorithms, each misclassified pair functions as a \emph{separating hyperplane} that separates the current solution from the feasible region. The algorithms incrementally refine the convex feasible region containing the version space using the separating hyperplane. Although such a hyperplane is  typically assumed to be provided by a \emph{separation oracle}, in practice, one needs to implement such a separation oracle. 

Under a (quantum) weak online learning setup, one can implement the separation oracle via a uniform sampler or a quantum sampler, such as Grover's search. Thus, similarly, one can gain a speedup of $O^*(\sqrt{N})$ by using a quantum approach. Following this scheme, in this section, we propose a classical-quantum hybrid algorithm, HCP-RW, based on the \emph{cutting-plane random walk} (CP-RW) algorithm~\citep{bertsimas2004solving}, as the CP methods have an optimal update bounds of \( O^*(D) \)~\citep{nemirovskij1983problem,jiang2020improved} among all feasibility algorithms.
Below, we first review the geometric analysis of the classical CP-RW algorithm.

Proposed by \cite{bertsimas2004solving}, the CP-RW algorithm provides a method to effectively reduce the volume in $\mathbb{R}^D$. In each round, the algorithm preserves an approximate centroid $\bm z_t$ of the current space, which we use as a perceptron model similarly to the work of \cite{louche2015cutting}. Initialized being the centre of $\mathcal{P}_0=B_1$, the perceptron updates its position by $\bm z_{t+1} \gets \frac{1}{M}\sum_{j=1}^{M} \bm w'_{j,t}$~(Algorithm~\ref{Algo_HCPRW}), where $\bm w'_{j,t}$ are uniformly sampled vectors over the remaining space $\{\bm w\mid(\bm w -\bm z_t)^T \bm x_t'y_t'>0\}$ cut by a misclassified example $(\bm x'_t,y'_t)$. When the sample size $N$ is small, sampling $2M=O (\log^2(N))\leq O(D)$ points suffices that the volume cut-off in each round is at least $1/3$ with a high probability~\citep{bertsimas2004solving}. From Eqn.\@\eqref{d_gamma_asymp}, one has $V(\mathcal{VS})/V(\mathcal{P}_0)\geq\frac{\gamma^{D-1}}{\pi(D-1)}$, so letting $\frac{V(\mathcal{P}_r)}{V(\mathcal{P}_0)}\leq(\frac{2}{3})^r\leq\frac{\gamma^{D-1}}{\pi(D-1)}\leq \frac{V(\mathcal{VS})}{V(\mathcal{P}_0)} $ ensures $V(\mathcal{P}_r)\leq V(\mathcal{VS})$ and $\bm w_r\in \mathcal{VS}$. A direct calculation yields $r \geq (D-1)\log_{\frac{3}{2}}({1}/{\gamma})+\log_{\frac{3}{2}} ({\pi(D-1)}  )$, hence the stop criterion can be $r= \lceil D\log_{3/2} ({D}/{\gamma})  \rceil$. 

To ensure fast sampling, we adapt the \emph{Hit-and-Run} algorithm proposed by \cite{smith1984efficient} with a uniform sampling scheme. Starting from a \emph{warm start} distribution\footnote{A warm start is a starting distribution that is almost close to the stationary distribution.}, for all $\bm x\in \mathcal{P}_t$, it works as follows:
\begin{equation*}
    \begin{aligned}
& \textbf{Hit-and-Run}\\
& \text { (1) Choose a line } \ell \text { through the current point } \bm x \text { uniformly at random. } \\
& \text { (2) Move to a point }\bm y \text { chosen \textbf{uniformly} from } \mathcal{P}_t \cap \ell \text {. }
\end{aligned}
\end{equation*}

It is known that given a warm start, it has the best-known mixing bound such that it generates a stationary distribution inside of a well-rounded convex body $\mathcal{K}$ in $O^*(D^3)$ steps ~\citep{lovasz1999hit}. Particularly, this stationary distribution is uniform since it follows a uniform sampling scheme. In the $t^{th}$ round, starting from a current point $\bm w\in \mathcal{P}_t$ and giving a \emph{membership oracle} $\mathcal{O}_{\mathcal{P}_t}$ defined on the current convex body $\mathcal{P}_t$, each step of walk costs $O(\log (1/\epsilon) )$ calls to $\mathcal{O}_{\mathcal{P}_t}$ leveraging binary search. However, to implement a membership oracle $\mathcal{O}_{\mathcal{P}_t}$ in practice,  one needs to query $t$ example pairs, where
\begin{equation}
\label{single_mem_Boolean}
\begin{aligned}
&\mathcal{O}_{\mathcal{P}_t}(\bm w)= f_{\bm w} (\bm x'_1 y'_1,\bm z_{0})\lor f_{\bm w} (\bm x'_2y'_2,\bm z_{1}) \lor \cdots \lor f_{\bm w}(\bm x'_t y'_t,\bm z_{t-1}),
\end{aligned}\end{equation} 
taking a similar form as Eqn.\@\eqref{version_q}. 
Here the Boolean functions are defined similarly as Eqn.\@\eqref{Boolean}, where $f_{\bm w} (\bm x'_jy'_j,\bm z_{j-1}) = 0 $ iff $(\bm w-\bm z_{j-1})^T \bm x'_j y'_j>0$, $\forall j \in[t]$. Herewith,  $\mathcal{O}_{\mathcal{P}_t}(\bm w)= 0$ iff $\bm w\in \mathcal{P}_t$. Consequently, it takes $O(t \log(1/\epsilon) )$ classical quires  to implement the membership oracle $\mathcal{O}_{\mathcal{P}_t}$ classically. Overall, this classical cutting-plane algorithm with the Hit-and-Run walk has a classical  query complexity of
$ O^* (D\log{({1}/{\gamma})}\cdot (N + \log^2N\cdot D^3 \cdot r )= O^* (D\log{({1}/{\gamma})}\cdot (N + D^5\log{({1}/{\gamma})} )$.

Following a CP-RW framework, we present the HCP-RW algorithm as shown in Algorithm~\ref{Algo_HCPRW}. It adopts Grover's search (shown as $Qsearch$ in Algorithm~\ref{Algo_HCPRW}) under the quantum weak online learning scheme and outputs a vector $\bm w_r\in \mathcal{VS}$ with a probability of failure at most $\epsilon$.  

\begin{algorithm}[ht!]
\caption{HCP-RW}\label{Algo_HCPRW}
\begin{algorithmic}
\State $\textbf{Input}:\{(\bm x_i,y_i)_{i\in[N]}\} $
\State $\textbf{Initialize}:$
\State $ W :=\{\bm w_1,\ldots,\bm w_{M}\} \in B_1$,  $\bm z := \bm 0$, $S(\bm x):=\frac{\bm x-\bm z}{\sqrt{\mathbb{E}[(\bm x-\bm z)(\bm x-\bm z)^T]}} $
\For{$i \in \{1,\ldots, \lceil D\log_{3/2}{ (\frac{D}{\gamma}  )}   \rceil  \}$}
\For{ $j \in \{1, \ldots, \lceil\log _{3 / 4} (\frac{\epsilon}{D\log_{3/2}{ (D/\gamma  )} }  )  \rceil  \}$}
\State{$(\bm x',y')\gets Qsearch (\bm z,\{(\bm x_i,y_i)_{i\in[N]}\}  )$} \Comment{with success probability $\geq \frac{1}{4}$.}
\If{$\bm z^T \bm x'y'\leq 0$}
\State{$  \textbf{delete } \bm w_i \textbf{ from } W \textbf{ if } f_{\bm w}(\bm x'y',\bm z)=1, \forall i \in [M]$.} 
\State $\textbf{calculate the affine transform function } S $
\State $W \gets \{\bm w_1,\ldots,\bm w_{2M}\} \gets \textit{Hit-and-Run }(W,S,\mathcal{Q}_{\mathcal{P}})$ 
\State $ W'\gets \{\bm w'_1,\ldots,\bm w'_{M}\} \gets\text{random sample } M \text{ points from }W $
\State $\bm z \gets \frac{1}{M}\sum_{j=1}^{M} \bm w'_j$
\State $\textbf{discard } W' \textbf{ from } W$ \Comment{ avoid
correlations with future samples.}
\EndIf
\EndFor
\EndFor
\end{algorithmic}
\end{algorithm}

Instead of using the classical membership oracle  $\mathcal{O}_{\mathcal{P}_t}$, the HCP-RW algorithm implements a quantum membership oracle (shown as $\mathcal{Q}_{\mathcal{P}}$ in Algorithm~\ref{Algo_HCPRW}) by leveraging a quantum counting algorithm as of \cite{liao2024quadratic}. Specifically,  $\mathcal{Q}_{\mathcal{P}_t}$ can be implemented  with a small probability of failure by using $O(\sqrt{t})$ quantum queries defined  similarly with Eqn.\@\eqref{online_q}
\begin{equation}\label{quan_mem}
    F\ket{w}|x_j'y_j'\rangle\ket{z_{j-1}} \ket{0}=\ket{w}|x_j'y_j'\rangle\ket{z_{j_1}}\ket{0\oplus {f_{\bm w} (\bm x_j',y_j',\bm z_{j-1})}}.
\end{equation}

Taking the quantum speedups into account, the overall quantum query complexity is $O^*(D\log{({1}/{\gamma})}\cdot(\sqrt{N} +\log^2 N \cdot D^3\cdot\sqrt{t})
$, substituting $t\leq r=O(D\log(1/\gamma))$, it is asymptotically bounded by $O^*(D\log{({1}/{\gamma})}\cdot(\sqrt{N} +\log^2 N \cdot D^{3.5}\cdot\sqrt{\log (1/\gamma)}
= O^*(D\log{({1}/{\gamma})}\cdot(\sqrt{N} +D^{4.5}\cdot\sqrt{\log (1/\gamma)}$. Additionally, in both HCP-RW and CP-RW algorithms, affine transforms are required to ensure the convex bodies are well-rounded; it incurs $O(D^2)$ arithmetic operations, so their arithmetic operations are bounded by  $O^*(D\log{({1}/{\gamma})}\cdot(\log^2 N \cdot D^3\cdot D^2)=O^*(\log^2 N \cdot\log(1/\gamma)\cdot D^6)=O^*(D^7 \log(1/\gamma))$.

From another perspective, it can also be seen clearly how quantum algorithms/subroutines can speed up solving a feasibility problem. In a similar vein, there are flourishing advancements in quantum linear programs and semidefinite programs algorithms. For example, \cite{kerenidis2020quantum,apers2026quantum} developed quantum algorithms to speed up the interior point method. In the former, the authors emphasise an efficient quantum algorithm for constructing block encodings to solve linear systems. In the latter, the authors deliberate on using Grover's search to efficiently approximate the Newton linear system matrices, and based on which, \cite{li2024quantum} proposed a quantum algorithm to approximate the Löwner–John  ellipsoid. 

\subsection {Fully-Quantum Cutting-Plane Quantum Walk Algorithm}
\label{FQCP_QW}

Realising that for the HCP-RW algorithm, a query complexity of $O^*(D^{5.5})$ is still fairly high when $D$ is large. However, it has been proved that the cutting plane methods achieve the optimal number of calls of the separation oracle $\Omega^*(D) $ for solving the feasibility problem. Remarkably, it has also been shown by \cite{van2020convex} that quantum algorithms cannot improve this lower bound under a black-box interface. Yet we can see that using a random walk sampler incurs a high computational cost of $O^*(D^3)$. 

In this section, we aim to improve the computational complexity by applying alternative quantum enhancements in the sampling process. Specifically, we propose a fully-quantum algorithm, the \emph{cutting-plane quantum walk} (QCP-QW) algorithm in Section~\ref{QCP_QW}, adopting the quantum weak online learning framework as HCP-RW, but differs in that both the training dataset and the \emph{model itself} in the dual (parameter) space are represented in quantum superposition.
 
On one hand, it preserves the strengths of the CP method, achieving a margin dependence of \( O^*(D \log (1/\gamma)) \). On the other hand, its algorithmic structure is inherently quantum: both the dataset and the perceptron model are encoded as quantum states, and  quantum operations are performed coherently.  Specifically, the algorithm incorporates the \emph{hit-and-run quantum walk} framework~\citep{chakrabarti2023quantum,li2022quantum}, which leverages the quantum walk search algorithm introduced by \cite{magniez2007search} along with other quantum subroutines, such as quantum mean estimation and affine transformation estimation. As a result, the quantum perceptron model itself is updated continuously without destructive measurements. Compared with the HCP-RW algorithm, the sampling process is done quantumly instead of classically, which is the fundamental reason for gaining an extra speedup of $O^*(D^{1.5})$.

\subsubsection{Szegedy's Quantum Walk}\label{Szegedy}

In the Appendix~\ref{Markov_chian}, we introduce some preliminaries of  classical and quantum Markov chains. 

The Szegedy's quantum walk is defined based on an  \emph{ergodic} and \emph{time-reversible} Markov chain $P$, it works on the Hilbert space \( \mathcal{H} = \mathbb{C}^N \otimes \mathbb{C}^N \) with basis states \( \ket{\bm x}\ket{\bm y} \) for \( \bm x,\bm y \in \Omega \).  The Szegedy's quantum walk operator $W(P): = R_{\mathcal{B}} R_{\mathcal{A}}$, where  $R_{\mathcal{I}}=2\operatorname\Pi_{\mathcal{I}}-\mathbb{1}$ is the reflection about the $\mathcal{I}$ subspace, with subspaces $\mathcal{A}=\operatorname{span}\{\ket{\bm x}\ket{\bm 0}: \bm x\in \operatorname\Omega\}$ and $\mathcal{B}=\operatorname{span}\{\ket{\bm 0}\ket{\bm x}: \bm x\in \operatorname\Omega\}$. It was shown by \cite{szegedy2004quantum,magniez2007search} that for an \emph{irreducible} and time-reversible Markov chain \( P \), the quantum walk operator \( W(P) \) has a unique eigenvalue-1 eigenvector
$$
\ket{\pi} = \ket{\pi}' \ket{\bm 0} \in \mathcal{H},
$$
in the subspace \( \mathcal{A} \cap \mathcal{B} \), where $\ket{\pi}'$ is a quantum sample\footnote{A quantum sample corresponding to a classical probability distribution \( f \) can is defined as
$\ket{f} := \sum_{\bm x \in \Omega} \sqrt{f(\bm x)}\ket{\bm x}$, with $\sum_{\bm x \in \Omega} |f(\bm x)|^2 = 1$.} of the stationary distribution \( \bm{\pi} \) of \( P \), given by
$$
\ket{\pi}' = \sum_{\bm x \in \Omega} \sqrt{\pi_{\bm x}} \ket{\bm x}.
$$
The eigenvalues of \( W(P) \) with non-zero imaginary parts are exactly \( e^{\pm 2 \pi i \theta_j} \), where
$$
\cos \theta_j = |\lambda_j|,
$$
with \( j \in [M] \), \( M \leq N-1 \), on the subspace \( \mathcal{A} + \mathcal{B} \). These \( 2M \) eigenvalues satisfy \( \theta_j \in (0, \pi/2) \) and \( 0 < |\lambda_j| < 1 \) for \( P \). In the following, we consider quantum walks acting on the subspace \( \mathcal{A} + \mathcal{B} \), since their action on the orthogonal complement is trivial (eigenvalues \( \pm 1 \)). 

A quantum-walk-update for \( P_t \) is given by a unitary operator \( U(P_t) \) that satisfies, for all \( \bm x \in \Omega \) and fixed \( \bm 0 \in \Omega \),
\begin{equation}\label{quantum_update}
U(P_t) \ket{\bm x}\ket{\bm 0} = \ket{\bm x} \ket{\bm p_x} = \ket{\bm x} \sum_{\bm y \in \Omega} \sqrt{p_{xy}} \ket{\bm y}.
\end{equation}
The quantum walk operator for the \( t^{th} \) Markov chain $W_t $ can be implemented as
\begin{equation}\label{quantum_walk}
    W_t := W(P_t) = U(P_t)^\dagger S U(P_t) R_{\mathcal{A}} U(P_t)^\dagger S U(P_t) R_{\mathcal{A}},
\end{equation}
where \( S \) denotes the swap operator.

\subsubsection{Algorithm Construction and Analysis}\label{QCP_QW}

Below, we provide a general description of our QCP-QW algorithm with pseudocode (Algorithm~\ref{Algo_QCPQW}) and an illustration picture (Figure~\ref{AF}).

\begin{itemize}
\item \textbf{High-level:}  
The QCP-QW algorithm follows the quantum weak online learning framework as illustrated in Figure~\ref{qweakonline}, similar to the HCP-RW algorithm. Specifically, the inner loop employs Grover's search ($Qsearch$ in Algorithm~\ref{Algo_QCPQW})  to ensure the probability of finding a separating hyperplane is at least $1/4$. Relying on the CP method as detailed in Section~\ref{HCPRW}, the outer loop performs \( O^*(D \log(1/\gamma)) \) updates to guarantee the final solution \( \ket{\pi_r} \) lies inside the version space with small error. 
    
    \item \textbf{Middle-level:}  
    Apply the non-destructive quantum mean estimation procedure \( U^{\operatorname{mean}}_t \) on the state \( \ket{\tilde{\pi}_t} \) to acquire the multivariate mean \( \bm{z}_{t+1} \) of \( \tilde{\bm{\pi}}_{t+1} \). Similarly, apply the non-destructive affine estimation procedure \( U^{\operatorname{aff}}_t \) on \( M = O(D) \) copies of \( \ket{\tilde{\pi}_t} \) to obtain the affine transformation matrix \( S_{t+1} \) of \( \tilde{\bm{\pi}}_{t+1} \).
    
    \item \textbf{Low-level:}  
    Apply the affine transformation function \( g_t \) to the states \( \ket{\tilde{\pi}_t} \) to ensure the convex body \( \mathcal{P}_t \) is well-rounded.\footnote{A convex body is well-rounded if it contains a ball of radius \( r \) and is contained in a ball of radius \( R \), such that \( R/r = O(\sqrt{D}) \).} Then run the quantum-walk-evolution unitary \( \tilde{U}_{t,m} \) to generate the distribution \( \ket{\tilde{\pi}_{t+1}} \) via
    $$
        \ket{g_t \tilde{\pi}_{t+1}} = \tilde{U}_{t,m} \ket{g_t \tilde{\pi}_t}.
    $$
\end{itemize}

\begin{algorithm}[ht]
\caption{QCP-QW} \label{Algo_QCPQW}
\begin{algorithmic}[1]
    \State \textbf{Input:} \(\{(\bm{x}_i, y_i)\}_{i \in [N]}\)
    \State \textbf{Initialize:}
    \State \quad \( \ket{\pi_0}\ket{0}^{\otimes ac} := \sum_{w \in B_{1,\epsilon}} \sqrt{\frac{1}{|B_{1,\epsilon}|}} \ket{w} \ket{0}^{\otimes ac} \)
    \State \quad \( \bm{z}_0 := \bm{0}, \quad S_0 = \mathbb{1}, \quad g_0(\bm{w}) := S_0 (\bm{w} - \bm{z}_0) \)
    \For{\( i = 1, \ldots, \lceil D \log_{3/2}(D/\gamma) \rceil \)}
        \For{\( j = 1, \ldots, \lceil \log_{3/4}(\frac{\epsilon}{D \log_{3/2}(D/\gamma)}) \rceil \)}
            \State \((\bm{x}', y') \gets Qsearch(\bm{z}_0, \{(\bm{x}_i, y_i)\}_{i \in [N]})\) \Comment{success probability \(\geq 1/4\)}
            \If{\( \bm{z}^T \bm{x}' y' \leq 0 \)}
                \State (1) Update \(\ket{g_t \tilde{\pi}_{t+1}} = \tilde{U}_{t,m} \ket{g_t \tilde{\pi}_t}\) and then invert to obtain \(\ket{\tilde{\pi}_{t+1}}\).
                \State (2) Apply \( U^{\operatorname{mean}}_{t+1} \) on the state \( \ket{\tilde{\pi}_t} \) to acquire the 
                \State ~ multivariate mean \( \bm{z}_{t+1} \) of \( \tilde{\bm{\pi}}_{t+1} \) non-destructively.
                \State (3) Apply \( U^{\operatorname{aff}}_{t+1} \) on \( M = O(D) \) copies of \( \ket{\tilde{\pi}_t} \) to acquire the 
                \State ~ covariance matrix \( S_{t+1} \) of \( \tilde{\bm{\pi}}_{t+1} \) non-destructively.
            \EndIf
        \EndFor
    \EndFor
\end{algorithmic}
\end{algorithm}

\begin{figure}[ht!]
\centering
\tikzset{meter/.append style={scale=0.6, draw, inner sep=10, rectangle, font=\vphantom{A}, minimum width=30, line width=.8,
 path picture={\draw[black] ([shift={(.1,.3)}]path picture bounding box.south west) to[bend left=50] ([shift={(-.1,.3)}]path picture bounding box.south east);\draw[black,-latex] ([shift={(0,.1)}]path picture bounding box.south) -- ([shift={(.3,-.1)}]path picture bounding box.north);}}}

 \scalebox{0.8}{
\begin{tikzpicture}
\def\qubitsep{1.0}
\def\colsep{1}
\foreach \i in {0,1} {
\draw [thick]  (0, -\i*1.5*\qubitsep) -- (10.3*\colsep, -\i*1.5*\qubitsep);}

\node at (-0.2*\colsep, -0.75*\qubitsep) {\vdots};
\node at (-0.2*\colsep, -3.7*\qubitsep) {\vdots};
\node  at (-0.2*\colsep, -7.3*\qubitsep) {\vdots};
\node  at (-0.2*\colsep, -6.9*\qubitsep) {\vdots};

\draw [thick] (0, -3*\qubitsep) -- (10.3*\colsep, -3*\qubitsep);
\draw [thick] (0, -4.5*\qubitsep) -- (10.3*\colsep, -4.5*\qubitsep);
\draw [thick] (0, -5.5*\qubitsep) -- (10.3*\colsep, -5.5*\qubitsep);
\draw [thick] (0, -6.5*\qubitsep) -- (10.3*\colsep, -6.5*\qubitsep);
\draw [thick] (0, -8*\qubitsep) -- (10.3*\colsep, -8*\qubitsep);

\draw [thick] (10.9, -5.5*\qubitsep) -- (13.5*\colsep, -5.5*\qubitsep);
\draw [thick] (10.9, -6.5*\qubitsep) -- (13.5*\colsep, -6.5*\qubitsep);
\draw [thick] (10.9, -8*\qubitsep) -- (13.5*\colsep, -8*\qubitsep);

\node[left] at (0, -4.5*\qubitsep) {$\ket{0}$};

\node[left] at (0, -3*\qubitsep) {$\ket{0}$};
\node[left] at (0, -5.5*\qubitsep) {$\ket{0}$};
\node[left] at (0, -6.5*\qubitsep) {$\ket{0}$};
\node[left] at (0, -8*\qubitsep) {$\ket{0}$};

\draw[fill=white, thick] (0.5*\colsep, 0.4*\qubitsep) rectangle (1.7*\colsep, -5*\qubitsep);
 \node at (1.1*\colsep, -2.5*\qubitsep) {$U_{t+1}^{mean}$};
\draw [dashed] (2*\colsep,0.3*\qubitsep)-- (2*\colsep, -2*\qubitsep);
\node[right] at (1.7*\colsep, 0.6*\qubitsep) {$\ket{\tilde{\pi}_{t}}^{\otimes M_1}$};

\draw[fill=black] (2.5*\colsep, -3*\qubitsep) circle (2pt);
\draw[fill=black] (2.5*\colsep, -4.5*\qubitsep) circle (2pt);
\node  at (2.5*\colsep, -3.7*\qubitsep) {\vdots};
\draw [thick] (2.5*\colsep, -3.1*\qubitsep)-- (2.5*\colsep, -3.5*\qubitsep);
\draw [thick] (2.5*\colsep, -5.5*\qubitsep)-- (2.5*\colsep, -4.2*\qubitsep);
\draw[fill=white,thick] (1.8*\colsep, -5.2*\qubitsep) rectangle (3.2*\colsep, -5.8*\qubitsep); \node at (2.5*\colsep, -5.5*\qubitsep) {median };

\draw[fill=white,thick] (3.3*\colsep, 0.4*\qubitsep) rectangle (4.3*\colsep, -5*\qubitsep);
\node at (3.8*\colsep, -2.5*\qubitsep) {$\tilde{U}_{t,m}$};

\draw [dashed] (4.6*\colsep,0.3*\qubitsep)-- (4.6*\colsep, -2*\qubitsep);
\node[right] at (4.3*\colsep, 0.6*\qubitsep) {$\ket{\tilde{\pi}_{t+1}}^{\otimes M_1}$};

\draw[fill=white, thick] (5.9*\colsep, 0.4*\qubitsep) rectangle (7.1*\colsep, -5*\qubitsep);
 \node at (6.5*\colsep, -2.5*\qubitsep) {$U_{t+2}^{mean}$};
 \draw [dashed] (7.4*\colsep,0.3*\qubitsep)-- (7.4*\colsep, -2*\qubitsep);
\node[right] at (7.1*\colsep, 0.6*\qubitsep) {$\ket{\tilde{\pi}_{t+1}}^{\otimes M_1}$};
 
 \draw[fill=black] (7.9*\colsep, -3*\qubitsep) circle (2pt);
\draw[fill=black] (7.9*\colsep, -4.5*\qubitsep) circle (2pt);
\node  at (7.9*\colsep, -3.7*\qubitsep) {\vdots};
\draw [thick] (7.9*\colsep, -3*\qubitsep)-- (7.9*\colsep, -3.5*\qubitsep);
\draw [thick] (7.9*\colsep, -6.5*\qubitsep)-- (7.9*\colsep, -4.2*\qubitsep);
\draw[fill=white,thick] (7.2*\colsep, -6.2*\qubitsep) rectangle (8.6*\colsep, -6.8*\qubitsep); \node at (7.9*\colsep, -6.5*\qubitsep) {median};

\draw[fill=white,thick] (8.7*\colsep, 0.4*\qubitsep) rectangle (9.9*\colsep, -5*\qubitsep);
\node at (9.3*\colsep, -2.5*\qubitsep) {$\tilde{U}_{t+1,m}$};
 \draw [dashed] (10.2*\colsep,0.3*\qubitsep)-- (10.2*\colsep, -2*\qubitsep);
\node[right] at (9.9*\colsep, 0.6*\qubitsep) {$\ket{\tilde{\pi}_{t+2}}^{\otimes M_1}$};

\node  at (11.9*\colsep, -7.3*\qubitsep) {\vdots};
\node  at (11.9*\colsep, -6.9*\qubitsep) {\vdots};

\draw[fill=white,thick] (11.2*\colsep, -7.7*\qubitsep) rectangle (12.6*\colsep, -8.3*\qubitsep); \node at (11.9*\colsep, -8*\qubitsep) {median};

\node[meter] (meter) at (13.4*\colsep, -5.5*\qubitsep) {};
\node[meter] (meter) at (13.4*\colsep, -6.5*\qubitsep) {};
\node[meter] (meter) at (13.4*\colsep, -8*\qubitsep) {};

\node[right]  at (13.8*\colsep, -5.5*\qubitsep) {$\bm z_{t+1}$};
\node[right]  at (13.8*\colsep, -6.5*\qubitsep) {$\bm z_{t+2}$};
\node[right]  at (13.8*\colsep, -8*\qubitsep) {$\bm z_{r+1}$};
\draw [decorate,decoration={brace,amplitude=4pt,mirror},xshift=0.5cm,yshift=0pt,thick]
    (-0.9,0.3) -- (-0.9,-1.8) node [midway,left,xshift=-0.1cm] {$\ket{\tilde{\pi}_t}^{\otimes M_1}$};    
      
\node[right] at (10.3*\colsep, 0*\qubitsep) {$\cdots$};
\node[right] at (10.3*\colsep, -1.5*\qubitsep) {$\cdots$};
\node[right] at (10.3*\colsep, -3*\qubitsep) {$\cdots$};
\node[right] at (10.3*\colsep, -4.5*\qubitsep) {$\cdots$};
\node[right] at (10.3*\colsep, -5.5*\qubitsep) {$\cdots$};
\node[right] at (10.3*\colsep, -6.5*\qubitsep) {$\cdots$};
\node[right] at (10.3*\colsep, -8*\qubitsep) {$\cdots$};

\node at (-0.2*\colsep, -10.75*\qubitsep) {\vdots};
\node at (-0.2*\colsep, -13.7*\qubitsep) {\vdots};
\node  at (-0.2*\colsep, -17.3*\qubitsep) {\vdots};
\node  at (-0.2*\colsep, -16.9*\qubitsep) {\vdots};

\draw [thick] (0, -10*\qubitsep) -- (10.3*\colsep, -10*\qubitsep);
\draw [thick] (0, -11.5*\qubitsep) -- (10.3*\colsep, -11.5*\qubitsep);
\draw [thick] (0, -13*\qubitsep) -- (10.3*\colsep, -13*\qubitsep);
\draw [thick] (0, -14.5*\qubitsep) -- (10.3*\colsep, -14.5*\qubitsep);
\draw [thick] (0, -15.5*\qubitsep) -- (10.3*\colsep, -15.5*\qubitsep);
\draw [thick] (0, -16.5*\qubitsep) -- (10.3*\colsep, -16.5*\qubitsep);
\draw [thick] (0, -18*\qubitsep) -- (10.3*\colsep, -18*\qubitsep);

\draw [thick] (10.9, -15.5*\qubitsep) -- (13.5*\colsep, -15.5*\qubitsep);
\draw [thick] (10.9, -16.5*\qubitsep) -- (13.5*\colsep, -16.5*\qubitsep);
\draw [thick] (10.9, -18*\qubitsep) -- (13.5*\colsep, -18*\qubitsep);

\node[left] at (0, -14.5*\qubitsep) {$\ket{0}$};
\node[left] at (0, -13*\qubitsep) {$\ket{0}$};
\node[left] at (0, -15.5*\qubitsep) {$\ket{0}$};
\node[left] at (0, -16.5*\qubitsep) {$\ket{0}$};
\node[left] at (0, -18*\qubitsep) {$\ket{0}$};

\draw[fill=white, thick] (0.5*\colsep, -9.6*\qubitsep) rectangle (1.7*\colsep, -15*\qubitsep);
 \node at (1.1*\colsep, -12.5*\qubitsep) {\small $U_{t+1}^{aff}$};
 
\draw [dashed] (2*\colsep,-9.7*\qubitsep)-- (2*\colsep, -12*\qubitsep);
\node[right] at (1.7*\colsep, -9.4*\qubitsep) {$\ket{\tilde{\pi}_{t}}^{\otimes M_2}$};

\draw[fill=black] (2.5*\colsep, -13*\qubitsep) circle (2pt);
\draw[fill=black] (2.5*\colsep, -14.5*\qubitsep) circle (2pt);
\node  at (2.5*\colsep, -13.7*\qubitsep) {\vdots};
\draw [thick] (2.5*\colsep, -13.1*\qubitsep)-- (2.5*\colsep, -13.5*\qubitsep);
\draw [thick] (2.5*\colsep, -15.5*\qubitsep)-- (2.5*\colsep, -14.2*\qubitsep);
\draw[fill=white,thick] (1.8*\colsep, -15.2*\qubitsep) rectangle (3.2*\colsep, -15.8*\qubitsep); \node at (2.5*\colsep, -15.5*\qubitsep) {median };

\draw[fill=white,thick] (3.3*\colsep, -9.6*\qubitsep) rectangle (4.3*\colsep, -15*\qubitsep);
\node at (3.8*\colsep, -12.5*\qubitsep) {$\tilde{U}_{t,m}$};

\draw [dashed] (4.6*\colsep,-9.7*\qubitsep)-- (4.6*\colsep, -12*\qubitsep);
\node[right] at (4.3*\colsep, -9.4*\qubitsep) {$\ket{\tilde{\pi}_{t+1}}^{\otimes M_2}$};

\draw[fill=white, thick] (5.9*\colsep, -9.6*\qubitsep) rectangle (7.1*\colsep, -15*\qubitsep);
 \node at (6.5*\colsep, -12.5*\qubitsep) {$U_{t+2}^{aff}$};
 \draw [dashed] (7.4*\colsep,-9.7*\qubitsep)-- (7.4*\colsep, -12*\qubitsep);
\node[right] at (7.1*\colsep, -9.4*\qubitsep) {$\ket{\tilde{\pi}_{t+1}}^{\otimes M_2}$};
 
 \draw[fill=black] (7.9*\colsep, -13*\qubitsep) circle (2pt);
\draw[fill=black] (7.9*\colsep, -14.5*\qubitsep) circle (2pt);
\node  at (7.9*\colsep, -13.7*\qubitsep) {\vdots};
\draw [thick] (7.9*\colsep, -13*\qubitsep)-- (7.9*\colsep, -13.5*\qubitsep);
\draw [thick] (7.9*\colsep, -16.5*\qubitsep)-- (7.9*\colsep, -14.2*\qubitsep);
\draw[fill=white,thick] (7.2*\colsep, -16.2*\qubitsep) rectangle (8.6*\colsep, -16.8*\qubitsep); \node at (7.9*\colsep, -16.5*\qubitsep) {median};

\draw[fill=white,thick] (8.7*\colsep, -9.6*\qubitsep) rectangle (9.9*\colsep, -15*\qubitsep);
\node at (9.3*\colsep, -12.5*\qubitsep) {$\tilde{U}_{ t+1,m}$};
 \draw [dashed] (10.2*\colsep,-9.7*\qubitsep)-- (10.2*\colsep, -12*\qubitsep);
\node[right] at (9.9*\colsep, -9.4*\qubitsep) {$\ket{\tilde{\pi}_{t+2}}^{\otimes M_2}$};

\node  at (11.9*\colsep, -17.3*\qubitsep) {\vdots};
\node  at (11.9*\colsep, -16.9*\qubitsep) {\vdots};

\draw[fill=white,thick] (11.2*\colsep, -17.7*\qubitsep) rectangle (12.6*\colsep, -18.3*\qubitsep); \node at (11.9*\colsep, -18*\qubitsep) {median};

\node[meter] (meter) at (13.4*\colsep, -15.5*\qubitsep) {};
\node[meter] (meter) at (13.4*\colsep, -16.5*\qubitsep) {};
\node[meter] (meter) at (13.4*\colsep, -18*\qubitsep) {};

\node[right]  at (13.8*\colsep, -15.5*\qubitsep) {$S_{t+1}$};
\node[right]  at (13.8*\colsep, -16.5*\qubitsep) {$ S_{t+2}$};
\node[right]  at (13.8*\colsep, -18*\qubitsep) {$S_{r+1}$};

\draw [decorate,decoration={brace,amplitude=4pt,mirror},xshift=0.5cm,yshift=0pt,thick]
    (-0.9,-9.7) -- (-0.9,-11.8) node [midway,left,xshift=-0.1cm] {$\ket{\tilde{\pi}_t}^{\otimes M_2}$};    
      
\node[right] at (10.3*\colsep, -10*\qubitsep) {$\cdots$};
\node[right] at (10.3*\colsep, -11.5*\qubitsep) {$\cdots$};
\node[right] at (10.3*\colsep, -13*\qubitsep) {$\cdots$};
\node[right] at (10.3*\colsep, -14.5*\qubitsep) {$\cdots$};
\node[right] at (10.3*\colsep, -15.5*\qubitsep) {$\cdots$};
\node[right] at (10.3*\colsep, -16.5*\qubitsep) {$\cdots$};
\node[right] at (10.3*\colsep, -18*\qubitsep) {$\cdots$};

\end{tikzpicture}
}
\caption{An illustration of the QCP-QW algorithm (assuming well-rounded). The upper one acts on $M_1$ copies of samples, and $U^{mean}$ is the non-destructive mean estimation circuit. The lower one acts on $M_2$ copies, with $U^{aff}$ denoting the circuit for non-destructively estimating the covariance matrix. Here the $\tilde{U}_{t,m}$ is the quantum-walk-evolution unitary that drives state from $\ket{\tilde{\pi}_t}$ to $\ket{\tilde{\pi}_{t+1}}$, and $\mathrm{median}$ denotes taking the median over multiple copies' results. }
    \label{AF}
\end{figure}

Since the classical Hit-and-Run~\citep{smith1984efficient} is defined on continuous spaces, we adapt the analysis tool of the continuous-space hit-and-run quantum walk framework, similar to \cite{chakrabarti2023quantum} and \cite{li2022quantum}. For a uniform distribution \( \bm{\pi}_t \) over \( \mathcal{P}_t \), its quantum sample is
$$
    \ket{\pi_t}' = \int_{\mathcal{P}_t} \sqrt{\frac{1}{V(\mathcal{P}_t)}} \ket{\bm x} \, \mathrm{d}\bm x,
$$
where
$$
    \ket{\pi_t} = \int_{\mathcal{P}_t} \sqrt{\pi_t(x)} \ket{\bm x} \otimes \ket{\psi_x} \, \mathrm{d}\bm x,
    \quad \text{with} \quad
    \ket{\psi_x} := \int_{\mathcal{P}_t} \sqrt{p_{xy}} \ket{\bm y} \,\mathrm{d}\bm y,
$$
defined on the support \( \mathcal{P}_t \subseteq \mathbb{R}^D \).

For digital implementation, we adopt the \(\epsilon\)-discretisation suggested by \cite{chakrabarti2023quantum}, which provides a detailed analysis of discretisation errors. Specifically, each coordinate of any vector inside the convex body \( \mathcal{K} \subseteq \mathbb{R}^D \) is represented using \( O(\log(1/\epsilon)) \) bits. This finite set of vectors \( \mathcal{K}_\epsilon \) is called an \(\epsilon\)-net of \( \mathcal{K} \). Assume access to an initial quantum state
$$
    \ket{\pi_0} = \ket{\pi_0}' \ket{\bm 0}
$$
over \( \Omega_0 = B_{1,\epsilon} \). Since \( \|\bm{w}\| \leq 1 \), there are \( (1/\epsilon)^D \) possible values defined over a grid with spacing \(\epsilon\), so the quantum sample can be represented with \( O(D \log(1/\epsilon)) \) qubits. Hence, the quantum sample \( \ket{\pi_t}' \) can be expressed as
$$
    \ket{\pi_t}' = \sum_{\bm w \in \mathcal{P}_{t,\epsilon}} \sqrt{\frac{1}{|\mathcal{P}_{t,\epsilon}|}} \ket{\bm w}.
$$

\begin{theorem}[Hit-and-Run Quantum Walk] \label{theorem_4}
Let \( P_t \) be an ergodic symmetric Markov chain defined by the \textbf{Hit-and-Run} algorithm as in Section~\ref{HCPRW} with stationary uniform distribution \( \bm{\pi}_t \) on the support $\mathcal{P}_{t} $. Suppose we have an initial distribution
$$
    \ket{\pi_t} = \ket{\pi_t'} \ket{\bm 0},
$$
the following properties hold:
\begin{itemize}
    \item \textbf{Mixing time:} \( d_{\mathrm{TV}}(P_t^\tau \cdot \bm\pi_{t+1}, \bm\pi_t) \leq \epsilon \).
    \item \textbf{Warmness:} \( \| \bm\pi_{t+1} / \bm\pi_t \| = O(1) \), where \( \| \bm\pi / \bm\sigma \| := \int_{\mathbb{R}^D} \frac{\pi(\bm x)}{\sigma(\bm x)} \pi(\bm x) \, \mathrm{d}\bm x\).
    \item \textbf{Overlap:} \( |\langle \pi_t | \pi_{t+1} \rangle| = \int_{\mathbb{R}^D} \sqrt{\pi_t(\bm x) \pi_{t+1}(\bm x)} \, d\bm x = \Omega(1) \).
\end{itemize}
Then, we can obtain a state \( \ket{\tilde{\pi}_{t+1}} \) via a quantum-walk-evolution unitary \( \tilde{U}_{t,m} \) such that
$$
    \ket{\tilde{\pi}_{t+1}} = \tilde{U}_{t,m} \ket{\pi_t}
$$
with
$$
    \|\ket{\tilde{\pi}_{t+1}} - \ket{\pi_{t+1}} \| \leq \epsilon,
$$
using \( O\bigl(\sqrt{\tau} \log(1/\epsilon)\bigr) \) calls to the quantum walk operators \( W_t \).
\end{theorem}

\begin{proof}
See Appendix~\ref{secC1}
\end{proof}

\begin{lemma}
 In our QCP-QW algorithm, the overlap condition $\left|\left\langle\pi_t |\pi_{t+1}\right\rangle\right|=\Omega(1)$ is met. 
\end{lemma}
\begin{proof}
 To show the overlap, we notice that as a result of \cite{bertsimas2004solving}, the volume cut-off $\geq1/3$ with high probability in each round, so equivalently the volume left is at least $1/3$. Therefore,
 $$
|\braket{\pi_t|\pi_{t+1}}|=\int_{\mathbb{R}^D}\sqrt{\frac{1}{V(\mathcal{P}_{t})V(\mathcal{P}_{t+1})}} \mathrm{d}\bm x= \frac{ V(\mathcal{P}_{t+1})}{\sqrt{V(\mathcal{P}_{t}) V(\mathcal{P}_{t+1})}}\geq \sqrt{\frac{1}{3}}=\Omega(1).
$$
\end{proof} 
\begin{lemma}
In our  QCP-QW algorithm, the warmness conditions $\left\|{\bm\pi}_{t+1} / \bm\pi_t\right\|=O(1)$ is met.
\end{lemma}
\begin{proof}
$\|\bm\pi_{t+1} / \bm\pi_t\|=\int_{\mathbb{R}^D} \frac{\pi_{t+1}(\bm x)}{\pi_t(\bm x)} \pi_{t+1}(\bm x) \mathrm{d} \bm x= \int_{\mathcal{P}_{t+1}} \frac{ \frac{1}{V(\mathcal{P}_{t+1})}}{\frac{1}{V(\mathcal{P}_{t})}} \frac{1}{V(\mathcal{P}_{t+1})}\mathrm{d}\bm x= \frac{V(\mathcal{P}_{t})}{V(\mathcal{P}_{t+1})} \leq3 =O(1)$.
\end{proof}

From Theorem~\ref{theorem_4}, one can obtain a state $\left|\tilde{\pi}_{t+1}\right\rangle=\tilde{U}_{t,m}\ket{\pi_t}$ with $\|\left|\tilde{\pi}_{t+1}\right\rangle-\left|\pi_{t+1}\right\rangle \| \leq \epsilon_1$ by invoking  $ \{R_{t}, R_{t}^{\dagger}, R_{t+1},  R_{t+1}^{\dagger}\}$ $O(\log \frac{1}{\epsilon_1})$ times, with $O(\sqrt{\tau} \log (1 / \epsilon_1))$ calls to the quantum walk operators. Furthermore, by the result of \cite[Corollary 1]{wocjan2008speedup}, that given $\ket{\pi_0}$, one is able to prepare a state through $ |\tilde{\pi}_r \rangle = {\tilde{U}}_{r-1}\cdots{\tilde{U}}_{0} \ket{\pi_0} $, such that 
$\| |\tilde{\pi}_r  \rangle- |\pi_r  \rangle \| \leq r \epsilon_1 $, by invoking the unitaries from $ \{R_{t\in[r]}, R_{t\in[r]}^{\dagger}\}$ at most $O(r\log \frac{1}{\epsilon_1})$ times, with at most $O(r\sqrt{\tau}\log (1 /\epsilon_1))$ calls of \textsc{controlled}-$W_{t\in[r]}$ gates. Let $r\epsilon_1=\epsilon$, one can obtain $\| |\tilde{\pi}_r  \rangle- |\pi_r  \rangle \| = O(\epsilon) $ with at most $O(r\sqrt{\tau}\log (r / \epsilon))$ calls of \textsc{controlled}-$W_{t\in[r]}$ gates. 

It is known that the Hit-and-Run algorithm has the best-known mixing time bounded by $\tau=O^*(D^3)$ given that a warm start condition, and the distribution space is well-rounded~\citep{lovasz1999hit}. Since we are using a uniform sampling scheme as described in Section~\ref{HCPRW} \textbf{Hit-and-Run}, its stationary distribution is uniform. Assuming that the subroutines required by Algorithm~\ref{Algo_QCPQW} can be implemented as described, after $r$ updates with misclassified examples, one can obtain an approximate uniform distribution $\ket{\tilde\pi_r}\in \mathcal{VS}$ that satisfies $\| |\tilde{\pi}_r  \rangle- |\pi_r  \rangle \| = O(\epsilon) $ with $O^*(r D^{1.5})$ calls of \textsc{controlled}-$W_{t\in[r]}$ gates. Moreover, since $\ket{\pi_r}\in \mathcal{VS}$ is a quantum state, one may directly use it to classify quantum state test data by leveraging techniques such as the \emph{swap test}~\citep{buhrman2001quantum}, where a probability of measurement can estimate the overlap of two quantum states.

\subsubsection{Feasibility}\label{feasibility}

\paragraph{Non-destructive mean estimation}
After preparing a uniform superposition via $\ket{\tilde{\pi}_{t}} = \tilde{U}_{t-1,m} \ket{\tilde{\pi}_{t-1}}$, we need to approximate the centroid $\bm{z}_t$ of the distribution $\tilde{\bm{\pi}}_t$. While measuring multiple copies of $\ket{\tilde{\pi}_t}$ is the most straightforward method, it would collapse the quantum state and require substantial resources. 
To circumvent this, we note that non-destructive amplitude estimation techniques~\citep{harrow2020adaptive,cornelissen2023sublinear} (see Appendix~\ref{secC2}), together with recent results on multivariate mean estimation~\citep{van2021quantum,cornelissen2022near}, provide tools that may be useful for this task.

The former guarantees that $M_1 = O^*(1)$ copies of $\ket{\tilde{\pi}_{t-1}}$ can be recovered with high probability after quantum phase estimation~\citep{brassard2000quantum}, while the latter provides a multivariate mean estimation procedure that can be implemented using $O^*(D)$ applications of the quantum walk unitary $\tilde{U}_{t-1,m}$, yielding an estimate with constant $\ell_2$-norm error.

Taken together, these results suggest that a non-destructive multivariate mean-estimation subroutine $U^{\operatorname{mean}}_t$  may be realizable using $O^*(1)$ copies of $\ket{\tilde{\pi}_{t-1}}$, with each copy undergoing $O^{(D)}$ applications of $\tilde{U}_{t-1,m}$ while being recovered with high probability. Under this interpretation, since each $\tilde{U}_{t-1,m}$ involves $O^*(D^{1.5})$ calls to \textsc{controlled}-$W(P_{t \in [r]})$ gates, the resulting circuit  $U^{\operatorname{mean}}_t$ would require $O^*(D\cdot D^{1.5}) = O^*(D^{2.5})$ calls to \textsc{controlled}-$W(P_{t \in [r]})$ gates.

\paragraph{Non-destructive affine transformation estimation}

To satisfy the well-roundedness condition, it is known that $M_2=O^*(D)$ samples from a convex body $\mathcal{P}_t$ suffice to compute an affine transformation $g_t(\bm{w}) = S_t(\bm{w} - \bm{z}_t)$ that renders the body well-rounded~\citep{rudelson1999random}. Specifically, the empirical mean is defined as $\bm{z}_t := \frac{1}{M_2} \sum_{i=1}^{M_2} \bm{w}_i$, and the affine transformation matrix is $S_t = A_t^{-1/2}$, where the empirical covariance matrix $A_t$ is
$$
A_t = \frac{1}{M_2} \sum_{i=1}^{M_2} (\bm{w}_i - \bm{z}_t)(\bm{w}_i - \bm{z}_t)^T.
$$
Hence, $M_2 = O^*(D)$ copies of $\ket{\tilde{\pi}_{t-1}}$ are sufficient in each round.

As suggested by~\cite{chakrabarti2023quantum,li2022quantum}, one may consider a non-destructive affine transformation estimation procedure. Here, we adopt the complexity analysis for the affine transformation subroutine as stated in these works~\citep{chakrabarti2023quantum,li2022quantum}.

In particular, $M_2 = O^*(D)$ copies are used to estimate the covariance structure and derive the corresponding transformation $S_t$, which can be held in an ancillary quantum register. Then, with $O^*(1)$ applications of the non-destructive amplitude estimation circuit~\citep{harrow2020adaptive,cornelissen2023sublinear}, which consists of the quantum-walk-evolution unitary $\tilde{U}_{t-1,m}$, the affine transformation matrix can be estimated to high accuracy\footnote{We note that this complexity hinges on the implementation details of the moment estimation step.}. Meanwhile, a call of  $\tilde{U}_{t-1,m}$ involves $O^*(D^{1.5})$ calls to \textsc{controlled}-$W(P_{t \in [r]})$ gates. Consequently, the circuit $U^{\operatorname{aff}}_t$ requires $O^*(D\cdot D^{1.5})=O^*(D^{2.5})$ calls to \textsc{controlled}-$W_{t \in [r]}$ gates.

Since the quantum state evolution depends on the updated means and affine transformations for each distribution in $\tilde{\bm{\pi}}_{t \in [r]}$, our algorithm is structured such that the circuits $U^{\operatorname{mean}}_{t+1}$ and $U^{\operatorname{aff}}_{t+1}$ are interleaved between the states $\ket{\tilde{\pi}_t}$ and the quantum-walk-evolution unitary $\tilde{U}_{t,m}$, as illustrated in Figure~\ref{AF}.

\paragraph{Quantum-walk-update unitary}

Although Algorithm~\ref{Algo_QCPQW} is presented with complete quantum subroutines, we note that certain components, such as the quantum-walk-update unitary $U(P_t)$ within $W(P_t)$, can be very difficult to implement in practice, and as far as we know, it has not been made available on current quantum hardware. 

Notably, a proof-of-principle realisation of Szegedy's quantum walk, along with a quantum walk search algorithm, has been proposed for the experimental implementation in trapped ion systems~\citep{dunjko2015quantum}, where the authors give generic constructions of quantum walk search components, and they numerically verified the robustness of the scheme with a small-scale simulation. However, as they use a rank-one Markov chain (all columns of $P$ are identical and each of them is the stationary distribution), the implementation of $U(P)$ is simplified.

Recognising the challenges in implementing Szegedy's quantum walk with complex Markov chains on existing hardware, we frame our results as theoretical upper bounds in an idealised, decoherence-free setting and leave the demonstrations on near-term or fault-tolerant devices to future research. Nevertheless, circuit-level implementations of $U(P_t)$ have been explored for both continuous and discrete cases~\citep{chakrabarti2023quantum,li2022quantum}, 
and here we only provide a brief discussion of how to implement it in the discrete case in Appendix~\ref{secC3}.

Overall, the above discussion outlines a plausible implementation strategy based on existing quantum subroutines; in the following, we analyse the resulting complexity under this interpretation.

\subsubsection{Computational Complexity Analysis}
Here, we analyse the query complexity and arithmetic complexity of the proposed QCP-QW algorithm. From Eqn.\@\eqref{quantum_walk}, it can be seen that one  $W(P_t)$ can be implemented by constant calls of $U(P_t)$. And it is proved by \cite{chakrabarti2023quantum,li2022quantum}, that each $U(P_t)$ can be implemented with $O^*(1)$ calls of the quantum membership oracle $\mathcal{Q}_{\mathcal{P}_t}$ with use of binary search (see Appendix~\ref{secC3}), where every $\mathcal{Q}_{\mathcal{P}_t}$ can be realized with $O^*(\sqrt{t})$ ($t\leq r$) quantum queries of Eqn.\@\eqref{quan_mem}. 

Under the implementation described in the feasibility discussion, both $U^{\operatorname{mean}}_{t}$ and $U^{\operatorname{aff}}_{t}$ would require $O^*(D^{2.5}\sqrt{r})$ queries, substituting $r=O^*(D \log({1}/{\gamma}))$, the overall query complexity can be upper bounded by $O^* (D\log{({1}/{\gamma})} \cdot (\sqrt{N}+ D^{2.5} \sqrt{r}))=O^* (D\log{({1}/{\gamma})} \cdot (\sqrt{N}+ D^{3}\sqrt{\log(1/\gamma)})$. In comparison, the  Hit-and-Run HCP-RW algorithm has a query complexity of $ O^*(D\log{({1}/{\gamma})}\cdot(\sqrt{N} +D^{4.5}\cdot\sqrt{\log (1/\gamma)}$ and the CP-RW algorithm has a query complexity of
$ O^* (D\log{({1}/{\gamma})}\cdot (N + D^5\log{({1}/{\gamma})} )$ (see Section~\ref{HCPRW}).

Additionally, since the quantum walk unitary $\tilde{U}_{t,m}$ is needed on each copy of $\ket{\tilde\pi_{t}}$ inside both the $U^{\operatorname{mean}}_{t+1}$ and $U^{\operatorname{aff}}_{t+1}$, and for performing $\tilde{U}_{t,m}$ in a convex body $\mathcal{P}_{t}$, it has to be well-rounded. Hence, an affine transformation function $g_t$ is needed for each round as $\ket{g_{t}\tilde\pi_{t+1}}=\tilde{U}_{{t},m} \ket{g_{t}\tilde\pi_{t}} $. This matrix–vector product requires $O(D^2)$ arithmetic operations classically and dominates the overall arithmetic complexity of the algorithm. We assume that elementary classical arithmetic operations (such as addition and multiplication) can be implemented by reversible quantum circuits efficiently (e.g. the quantum adders or multipliers) acting on $O^*(D)$ qubits.

It is known that every unitary gate in a logical reversible circuit can be approximated to precision $\epsilon$ using a universal gate set with polylogarithmic overhead in $1/\epsilon$ via the Solovay–Kitaev theorem~\citep{dawson2005solovay}. Under this implementation framework, the arithmetic complexity of our algorithm is dominated by the matrix–vector product, yielding a bound of $O^* (D\log{({1}/{\gamma})} \cdot (D^2\cdot(D^{2.5})))=O^* (D^{5.5}\log(1/\gamma))$. In comparison, both the Hit-and-Run CP-RW and HCP-RW algorithms (see Section~\ref{HCPRW}) has $ O^* (D^7\log{({1}/{\gamma})})$ arithmetic operations.

\subsection{Comparisons of Quantum-Enhanced Perceptron Learning Algorithms} \label{comparison}

To summarise, we provide a Table~\ref{tab1} to compare the query complexities (under the respective implementation assumptions) of the classical and quantum-enhanced algorithms that we examined for the perceptron learning problem\footnote{Except for the version space perceptron, which is not applicable, other algorithms' complexities are analysed under the classical/quantum weak online learning setting.}. As we can see, all the quantum-enhanced algorithms achieve some speedups under idealised quantum computing models. 

\renewcommand*{\thefootnote}{\alph{footnote}}

\begin{table}[ht]
\caption{A summary of algorithms and query complexities} \label{tab1}
\begin{tabular}{ll}
\toprule
Algorithm & Complexity \\
\midrule\midrule
Version Space Perceptron 
  & $O^* ( D^{\frac{1}{2}} \frac{N}{\gamma^D} )$ \\
QVSP
  & $O^* ( D^{\frac{1}{4}} \sqrt{ \frac{N}{\gamma^D} } )$\footnotemark[1] \\
  \midrule
Online Perceptron 
  & $O^*( \frac{1}{\gamma^2} \log(1/\gamma^2) N )$ \\
OQP 
  & $O^* ( \frac{1}{\gamma^2} \log(1/\gamma^2) \sqrt{N})$ \\
  \midrule
CP-RW (Hit-and-Run) 
  & $O^* ( D \log( \tfrac{1}{\gamma} ) (N + D^5 \log ( \tfrac{1}{\gamma} )) )$\footnotemark[2] \\
HCP-RW (\ref{Algo_HCPRW})
  & $O^* ( D\log( \tfrac{1}{\gamma} ) ( \sqrt{N} + D^{4.5} \sqrt{\log( \tfrac{1}{\gamma} )} ) )$ \footnotemark[2]  \\
QCP-QW (\ref{Algo_QCPQW}) 
  & $O^* ( D\log( \tfrac{1}{\gamma} ) ( \sqrt{N}+ D^{3} \sqrt{\log( \tfrac{1}{\gamma} )} ) )$ \footnotemark[3] \\
\botrule
\end{tabular}
\footnotetext[a]{Here we only present the corrected worst-case result of the improved-QVSP of \cite{liao2024quadratic}.}
\footnotetext[b]{For Hit-and-Run CP-RW and HCP-RW introduced in Section~\ref{HCPRW}, we simplify the result using $O(\log^2N)\leq O(D)$. Otherwise, tighter bounds can be achieved by replacing an $O^*(D)$ factor in $O^*(D^5)$ and $O^*(D^{4.5})$ with $\log^2 N.$}
\footnotetext[c]{This complexity is derived under the implementation assumptions discussed in the feasibility section.}
\end{table}

\renewcommand*{\thefootnote}{\arabic{footnote}}

From a statistical perspective, the quantum-enhanced CP algorithms we proposed in this paper (HCP-RW and QCP-QW) offer improved margin dependence compared with the OQP and the QVSP algorithms in the worst case, though it comes at the cost of higher dimensional dependence of $\mathrm{poly} D$ relative to the OQP algorithm. However, it is worth noting that the dominant advantage concerning margin  $O^*(D\log(1/\gamma))$ is offered by classical LP rather than an intrinsic statistical advantage from quantum computation.

In terms of algorithmic design, the OQP utilise Grover’s search only in one instance, sampling a superposition over the training dataset to identify  a misclassified example pair. The HCP-RW improves it by adopting another quantum counting algorithm.  While for the QCP-QW, more quantum subroutines are used than just Grover's search and counting. As a result, a speedup of $O^*(D^{1.5})$  may be achievable compared with the HCP-RW in both query and arithmetic complexities.

\section{Discussion}
\subsection{Perceptron Problem Under Complexity Standpoint}

From the above discussion and Table~\ref{tab1}, we can see that for solving the perceptron problem using quantum methods, one can adopt different algorithms, which have their favourable regimes. For example, QVSP favours a low-rank data matrix, OQP favours a large-margin dataset, and the CP method could perform
well in regimes of small-margin and moderate feature dimension $D$.

From an algorithmic perspective, QVSP achieves a quadratic speedup over Monte-Carlo sampling by using Grover’s search, which is known to be optimal for unstructured search. Nevertheless, this may not be the most effective strategy. More broadly, related lower-bound results in quantum convex optimisation provide useful context. It has been shown that solving LPs in the black-box setting requires at least $\Omega(D\sqrt{N})$ queries~\citep{van2020convex}; while in the white-box model, a tight lower bound of $\Omega(\sqrt{ND})$ queries to the data matrix has also been conjectured~\citep{apers2026quantum}. These results, together with ours, suggest the utility of viewing the version space problem within the context of convex optimisation. And this naturally motivates the open question of identifying the most efficient algorithmic framework for the perceptron problem.

From a complexity-theoretic perspective, whether the version space volume problem could be resolved efficiently using quantum superposition remains an open problem. Especially, if the goal is to reach a version space (or a distribution) without the help of example instances (or separation hyperplanes) that define it, the task resembles a quantum state preparation problem. A direct approach based on unstructured search leveraging Grover's algorithm would typically incur a cost proportional to  $O^*(1/\sqrt{\pi(x)})$, where $\pi(x)$ can be exponentially small when the state space is exponentially large~\citep{wocjan2008speedup,harrow2020adaptive}. A general and efficient solution to such sampling tasks could have significant complexity-theoretic implications, potentially relating to the complexity classes SZK and BQP~\citep{aharonov2003adiabatic,orsucci2018faster}.

From the above discussion, we explicitly separate the two aspects of algorithmic efficiency and fundamental limitation.  We would like to emphasise that our contributions presented in this work only focus on the perspective of algorithm efficiency, rather than the fundamental limitations of quantum computing.

\subsection{Practical Limitations}\label{limitations}
Regarding algorithmic frameworks, we can see that both HCP-RW and QCP-QW rely heavily on Grover’s search and its extension of quantum counting. The fully quantum QCP-QW algorithm additionally requires quantum phase estimation, quantum-walk updates, and non-destructive amplitude estimation, making it technically more demanding. Beyond these technical differences, both approaches share some practical limitations in realising hardware-level speedups.

First, for realistic machine learning tasks, the dataset size $N$ is typically large. Achieving the speedup of $O^*(\sqrt{N})$ over the classical $O^*(N)$ baseline requires an efficient data access scheme (oracle) or QRAM ~\citep{aaronson2015read}. However, in practice, implementing such access can be computationally expensive and may be limited by hardware constraints such as data movement and I/O bandwidth. It has been shown in a perfect fault-tolerant quantum computer\footnote{We refer to the hypothetical fault-tolerant quantum computer with $10^4$ error-corrected logical qubits, $10~\mu s$ gate time for logical operations, all-to-all connectivity, and fully fault-tolerant two-qubit operations as investigated by ~\cite{hoefler2023disentangling}.} that a $10^4$ slowdown is possible in I/O bandwidth~\citep{hoefler2023disentangling}. 

The second issue is that, on real devices, decoherence is unavoidable, and oracles are imperfect.  Both algorithms are based on a strong oracle model (as in Eqn\@\eqref{online_q}). It is well-known that Grover-type speedups degrade significantly in the presence of oracle imperfections~\citep{shenvi2003effects,regev2008impossibility}. Additionally, both algorithms require $O^*(\sqrt{N})$ repetitions of Grover's unitary, which can result in substantial circuit depth and cause decoherence for NISQ devices. 

One may attempt to solve the problem by using a fault-tolerant quantum computing framework. Asymptotically, quantum computation does provide speedup since fewer oracles will be needed on a quantum computer than on a classical computer. However, as highlighted in recent analyses of realistic hardware costs, a constant-factor gap of $\sim 10^{10}$ operation throughput  between a classical GPU and a fault-tolerant quantum computer is representative~\citep {hoefler2023disentangling}.

Hence, the number of oracle calls needs to scale $\sim 10^{10}$ for a quadratic speedup of $O^*(\sqrt{N})$. 
This, in turn, places strong constraints on the allowable complexity of each oracle call. In particular, to achieve quantum advantage within a practical time scale (e.g., two weeks), each oracle would need to involve only $O^*(1)$ arithmetic operations. In our setting, this requirement is highly restrictive, since each oracle evaluation requires at least $\Omega(D)$ operations to compute vector inner products\footnote{Although we assume a black-box oracle model, considering hardware practicality, one needs to open it.}.

Taken together, these observations suggest that our results should be viewed as conceptual, asymptotic improvements, rather than immediately realisable practical speedups on either NISQ or fault-tolerant quantum platforms.

\section{Conclusion and Outlook}\label{conclu} 

In conclusion, through this work, we first point out the limitation of QVSP, which has an exponential dependence on data dimension under worst-case scenarios, and recognise that it may still perform effectively, such as in low-rank regimes. We then introduce two new quantum-enhanced perceptron learning algorithms grounded in the formulation of LPs, and we discuss their practical limitations. As such, we provide a refined understanding of the asymptotic computational complexity behaviours and potentials of some variant of quantum perceptron models. We hope that the present work provides useful insight into both the potential and the limitations of these models.

Furthermore, although current practical constraints are significant (as discussed in Section~\ref{limitations}), they do not preclude the possibility of meaningful quantum advantage in future settings. Such advantages may arise from alternative problem regimes (e.g., structured inputs), improved algorithmic frameworks achieving higher-order speedups, or advances in quantum hardware and system architectures.

Beyond the current work, we note that the classifier produced by the CP-RW method corresponds to a Bayes point machine (BPM) solution~\citep{herbrich2001bayes}, which is known to possess strong theoretical and empirical generalisation guarantees~\citep{minka2001family}. Exploring whether similar properties can be  realised within the proposed quantum-enhanced frameworks, under suitable implementation assumptions, remains an open question for future work.

Although the QCP–QW algorithm relies on a very strong oracle model and should therefore be interpreted primarily as a conceptual upper bound rather than a near-term implementable scheme~\citep{shenvi2003effects,regev2008impossibility}, the approach may provide a conceptual perspective for convex optimisation problems, since standard reductions allow convex minimisation problems to be reformulated as feasibility problems~\citep{nemirovski1994efficient}.  {However, extending the present framework to general optimisation settings would require a more complete implementation of the underlying quantum subroutines, as well as satisfying all feasibility requirements outlined in Section~\ref{feasibility}. This remains an open direction for future work.

\begin{appendices}

\renewcommand{\thefigure}{\arabic{figure}}
\setcounter{figure}{3}

\section{Asymptotic approximation} \label{Appen_A}

Here we provide more detailed analysis regarding the asymptotic approximation  for the incomplete beta function of Lemma~\ref{lemma_2}, where $\Pr [\{\bm{w}|\measuredangle (\bm{w},\bm u)< \arcsin\gamma\}  ]= \frac{1}{2}\mathrm{I}_{\gamma^2}(\frac{D-1}{2},\frac{1}{2}).$

We can validate Lemma~\ref{lemma_2} under boundary conditions first. When $\gamma=0$ (inseparable), $I_0(a,b)=0$, indicating the probability of finding a correct classifier is zero; when $\gamma=1$ (two classes are maximally separated), $I_1(a,b)=\frac{1}{2}$, the probability of finding a correct classifier is $\frac{1}{2}$. These results are consistent with the actual classification situation. 

On the other hand, $ \frac{1}{2}\mathrm{I}_{\gamma^2}(\frac{D-1}{2},\frac{1}{2})=\frac{1}{2}\frac{\mathrm{B}(\gamma^2 ;\frac{D-1}{2},\frac{1}{2})}{\mathrm{B}(\frac{D-1}{2},\frac{1}{2})}$, where $\mathrm{B} (a, b  )=\frac{\Gamma (a  ) \Gamma (b  )}{\Gamma (a+b  )}$ denotes the beta function and $\mathrm{B}(x; a, b)$ denotes the incomplete beta function. Expressing it as a continued fraction expansion, $\mathrm{B}(x;a,b)=\frac{x^a(1-x)^b}{a (1+\frac{d_1}{1+} \frac{d_2}{1+} \frac{d_3}{1+} \frac{d_4}{1+} \cdots  )}$, with coefficients $d_{2 m+1}=-\frac{(a+m)(a+b+m) x}{(a+2 m)(a+2 m+1)}$, and $d_{2 m}=\frac{m(b-m) x}{(a+2 m-1)(a+2 m)}$. Thus, in the limit of $\gamma  \rightarrow 0$,
\begin{equation*}
\begin{aligned}
    \mathrm{B}(\gamma^2 ;\frac{D-1}{2},\frac{1}{2})
    &=\frac{2\gamma^{D-1}\sqrt{(1-\gamma^2)}}{(D-1) (1-O(\gamma^2)  )} \sim \frac{2\gamma^{D-1}}{(D-1)}.
\end{aligned}
\end{equation*}
When $D\geq 2$ is a constant, one has $\mathrm{B}(\frac{D-1}{2},\frac{1}{2}) \leq \pi$, by property of gamma function. Accordingly, 
\begin{equation}
    \Pr [\{\bm{w}|\measuredangle (\bm{w},\bm u)< \arcsin\gamma\}  ] \sim\frac{\gamma^{D-1}}{(D-1)}/ {\mathrm{B}(\frac{D-1}{2},\frac{1}{2})} \geq \frac{\gamma^{D-1}}{\pi (D-1)}.
\end{equation}
In the limit of  $D \rightarrow\infty$, one can apply Stirling's approximation $\mathrm{B}(a, b) \sim \Gamma(b) a^{-b}$, such that
\begin{equation}
    \begin{aligned}
    &\Pr [\{\bm{w}|\measuredangle (\bm{w},\bm u)< \arcsin\gamma\}  ] \sim\frac{\gamma^{D-1}}{(D-1)}/ \sqrt{\frac{2\pi}{{D-1}}} 
    \sim \frac{\gamma^{D-1}}{\sqrt{2\pi (D-1)}}.
    \end{aligned}
\end{equation}

\section{Classical and Quantum Markov Chains}\label{Markov_chian}
Let \( P = (p_{xy})_{\bm x,\bm y \in \Omega} \) denote the stochastic transition matrix of a Markov chain with finite state space \(\Omega\), where \(|\Omega| = N\). The matrix element \( p_{xy} \) represents the transition probability from state \( \bm x \) to \( \bm y \), satisfying \( \sum_y p_{xy} = 1 \) for all \( \bm x \in \Omega \). If $P$ is \emph{ergodic}, which implies irreducibility, the Perron–Frobenius theorem  guarantees a unique eigenvalue-1 eigenvector \( \bm{\pi} \) such that \( \bm{\pi}^T P = \bm{\pi}^T \). The vector \( \bm{\pi} = (\pi_{\bm x})_{\bm x \in \Omega} \) is called the stationary distribution. \emph{Time-reversibility} means \( P = P^* \), while \emph{symmetry} means \( P = P^T \); in the symmetric case, the stationary distribution is uniform.

The spectral gap of \( P \) is defined as \( \delta(P) = 1 - |\lambda_1| \), where the eigenvalues satisfy \( 1 = \lambda_0 > |\lambda_1| \geq \ldots \geq |\lambda_{N-1}| \geq 0 \). For a classical Markov chain, the mixing time \( \tau(P) \) of an initial distribution \( \bm{\rho} \) is defined by the condition \( d_{TV}(P^\tau \bm{\rho}, \bm{\pi}) \leq \epsilon \) for some \( 1 > \epsilon > 0 \), where \( d_{TV} \) is the total variation distance. It is known that the mixing time is upper bounded by the inverse of the spectral gap, i.e., \( \tau(P) = O(1/\delta) \).

The quantum analogue of the spectral gap, called the phase gap, is defined as
$$
\Delta(P) = 2 \theta_1,\quad \text{with }
\cos \theta_j = |\lambda_j|, \quad j\in[0,\cdots,N-1]
$$
The phase gap represents the minimal angular distance of 1 to the other eigenvalues on the complex plane, and satisfies
$$
\Delta(P) \geq 2 \sqrt{\delta(P)}.
$$

\section{Quantum speedup} \label{secC}
\subsection{Proof of Theorem~\ref{theorem_4}}\label{secC1}
For proving Theorem~\ref{theorem_4}, we first need the following Lemmas~\ref{effective_gap},~\ref{Grover_fix_point},~\ref{implement_R}. 

\begin{lemma}[Effective spectral gap for warm start] {~\cite[Proposition 4.2]{chakrabarti2023quantum}, ~\cite[Lemma 5]{li2022quantum}}\label{effective_gap}
For an ergodic time-reversible Markov chain $P_t$, if the state $\ket{{\pi}_{t+1}}$ is a warm start of the stationary sates $\ket{{\pi}_t}$ which mixes up to a $d_{TV}\leq\epsilon$ after $\tau$ steps, then $\left|{\pi}_{t+1}\right\rangle=\left|\pi_{t+1, \text { good }}\right\rangle+\left|e_{t+1}\right\rangle$, where $\left|\pi_{t+1, \text { good }}\right\rangle$ lies in the subspace spanned by the fast-mixing eigenvectors $\left|\psi_j^\pm\right\rangle$ of $W(P_t)$, such that its eigenvalue $e^{\pm2 \pi i \theta_j}$ satisfies $\theta_j=0$ or $\theta_j =\Omega(\sqrt{1/\tau})$. And the state $\ket{e_{t+1}}$ is spanned by those slow-mixing eigenvectors whose  $0<\theta_i<O(\sqrt{1/\tau})$, has 
 its $\ell_2$ norm bounded by $\|\left|e_{t+1}\right\rangle \| \leq \sqrt{\epsilon}$.

\end{lemma}

\begin{lemma}[Grover's $\frac{\pi}{3}$ fix point search]{~\cite[Lemma 1]{wocjan2008speedup}}\label{Grover_fix_point}
Let $\ket{\pi_t}$, $\ket{\pi_{t+1}}$ be two arbitrary quantum states with overlap $|\braket{\pi_t|\pi_{t+1}}|^2\geq p$ for some $0<p\leq 1$ is a constant. Given $\ket{\pi_t}$, one is able to prepare a state through $ |\tilde{\pi}_{t+1} \rangle = {U}_{t,m} \ket{\pi_t} $, such that 
$\| |\tilde{\pi}_{t+1}  \rangle- |\pi_{t+1}  \rangle \| \leq \epsilon_1$ for any $\epsilon_1 >0$, by choosing $m=O(\log(\log {1}/{\epsilon_1}))$, which invokes the unitaries from $ \{R_{t}, R_{t}^{\dagger}, R_{t+1},  R_{t+1}^{\dagger}\}$ at most $M=O(\log {1}/{\epsilon_1})$ times.  Here the form of unitaries $U_{t,m}$ is defined recursively, such that $U_{t,0}=\mathbb{1}$, and $U_{t,i+1}=U_{t,i}\cdot R_t \cdot U_{t,i}^\dagger \cdot R_{t+1} \cdot U_{t,i}$. The unitaries $R_{t}$, $R_{t+1}$ operators are selective phase shifts that $R_{t}=e^{\frac{i \pi}{3}}\operatorname{\Pi}_t+\operatorname{\Pi}_t^\perp$, where $\operatorname{\Pi}_t$ is the orthogonal projector onto $\operatorname{span}\{\ket{\pi_t}\}$. 
\end{lemma}
\begin{proof}
The state $\ket{\pi_t}$ is a uniform superposition of states on the convex body $\mathcal{P}_t$. It can be decomposed in the subspace $\mathcal{H}_{\pi_t}=\operatorname{span}\{\ket{\pi_{t+1}},\ket{\pi_{t+1}^\perp}\}$ as 
$\ket{\pi_t}= \sin\theta_{t+1} \ket{\pi_{t+1}} +\cos\theta_{t+1} \ket{\pi_{t+1}^\perp}$, where 
$$
|\pi_{t+1}\rangle=\frac{\operatorname\Pi_{\mathcal{P}_{t+1}}|\pi_{t}\rangle}{\| \operatorname\Pi_{\mathcal{P}_{t+1}}|\pi_{t}\rangle \|}= \frac{\sum_{\bm x \in \mathcal{P}_{t+1}} \ket{\bm x} \langle \bm x|\sum_{\bm x' \in \mathcal{P}_{t}} \sqrt{\pi_{t,\bm x'}}|\bm x'\rangle |\bm p_x'   \rangle} { \sqrt{\bra{\pi_{t}} \operatorname\Pi_{\mathcal{P}_{t+1}}|\pi_{t}\rangle} }=\frac{\sum_{\bm x \in \mathcal{P}_{t+1}} \sqrt{\pi_{t,\bm x}}|\bm x\rangle |\bm p_x  \rangle}{\sqrt{\sum_{\bm x \in \mathcal{P}_{t+1}} \pi_{t,\bm x}}},
$$
and its orthogonal complement 
$$
|\pi_{t+1}^\perp\rangle=\frac{(\mathbb{1}-\operatorname\Pi_{\mathcal{P}_{t+1}}) |\pi_t\rangle}{\| (\mathbb{1}-\operatorname\Pi_{\mathcal{P}_{t+1}})|\pi_t\rangle \|}= \frac{\sum_{\bm x \notin \mathcal{P}_{t+1}} \ket{\bm x} \langle \bm x|\sum_{\bm x' \in \mathcal{P}_t} \sqrt{\pi_{t,\bm x'}}|\bm x'\rangle |\bm p_x'   \rangle} { \sqrt{\bra{\pi_t} (\mathbb{1}-\operatorname\Pi_{\mathcal{P}_{t+1}})|\pi_t\rangle} }=\frac{\sum_{\bm x \notin \mathcal{P}_{t+1}} \sqrt{\pi_{t,\bm x}}|\bm x\rangle |\bm p_x  \rangle}{\sqrt{\sum_{\bm x \notin \mathcal{P}_{t+1}} \pi_{t,\bm x}}}.
$$
A transition from $\ket{\pi_t}$ to $\ket{\pi_{t+1}}$ can be realized by the Grover's $\pi/3$ amplitude amplification algorithm~\citep{grover2005different}, such that there is a rotation $U_{t,m}$ yields $ |\langle \pi_{t+1}| U_{t,m}| \pi_t\rangle|^ 2 \geq 1-(1-p)^{M}$. From \cite[Corollary 1]{wocjan2008speedup}, to make sure $\| |\tilde{\pi}_{t+1}  \rangle- |\pi_{t+1}  \rangle \| \leq \epsilon_1$, one needs  $M=O(\frac{\log \frac{1}{\epsilon_1}}{\log(\frac{1}{1-p})}
)$ uses of  $R_{t}$, $R_{t+1}$ and theirs inverse. 
\end{proof}

\begin{lemma}\label{implement_R}
Let $P_t$ be an ergodic and symmetric Markov chain, the state $\ket{{\pi}_{t+1}}$ be a warm start of the stationary uniform state $\ket{{\pi}_t}$ which mixes up to a $d_{TV}\leq\epsilon_3$ after $\tau$ steps.  To implement $R_{t+1}$, one needs one quantum query as defined in Eqn.\@\eqref{online_q} with one ancillary qubit. To approximately implement $R_t$ up to an amplitude error $O(\sqrt{\epsilon_3})$, a number of ancillary qubits $ac=O(\log \sqrt{\tau} \log(1/\sqrt{\epsilon_3}))$ suffice with $O(\sqrt{\tau}\log\frac{1} {\sqrt{\epsilon_3}})$ invocations of \textsc{controlled}-$W_t$ gates.
\end{lemma}

\begin{proof}
It can be easily verified that if we have a unitary $P_{t+1}$, that $P_{t+1}\ket{\pi_{t+1}} \ket{0} =\ket{\pi_{t+1}} \ket{0}$ and $P_{t+1}\ket{\pi_{t+1}^\perp} \ket{0}  =\ket{\pi_{t+1}^\perp} \ket{1}$, then $R_{t+1}$ can be constructed by letting 
\begin{equation}\label{R_{t+1}}
    R_{t+1}=P_{t+1}^\dagger (\mathbb{1} \otimes(e^{\frac{i \pi}{3}}\ket{0}\bra{0}+\ket{1}\bra{1})  ) P_{t+1}
\end{equation}
as shown by \cite{wocjan2008speedup}. From the action of $P_{t+1}$, one can see that $P_{t+1}$ is effectively a quantum query that is similar to the Eqn.\@\eqref{online_q}, such 
\begin{equation*}
P_{t+1}\ket{\bm w}\ket{\bm x'_{t+1}y'_{t+1},\bm z_t}\ket{0}=\ket{\bm w}\ket{\bm x'_{t+1}y'_{t+1},\bm z_t}\ket{0\oplus f_{\bm w}(\bm x'_{t+1}y'_{t+1},\bm z_{t})},
\end{equation*}
where $f_{\bm w} (\bm x'_{t+1}y'_{t+1},\bm z_{t}) = 1 $ iff $(\bm w-\bm z_{t})^T \bm x'_{t+1} y'_{t+1}\leq 0$ ($\bm w\in\mathcal{P}_{t+1}$). Therefore, $R_{t+1}$ can be implemented simply with one quantum query and one ancillary qubit. 

The realisation of $R_{t}$ is more computationally expensive.  Here, we adopt the quantum walk search scheme proposed by \cite{magniez2007search}. The phase estimation~\citep{cleve1998quantum} circuit $PE(W_t)$, composed of \textsc{controlled}-$W_t$ gates, can be leveraged to realise unitaries as $R_{t}$. It is further analysed in \cite[Corollary 2]{wocjan2008speedup}, that for some $\epsilon_2>0$, 
there is a quantum circuit $V_t$ that acts on $\mathbb{C}^{N} \otimes\mathbb{C}^{N} \otimes (\mathbb{C}^2  )^{\otimes a c}$. It uses the $PE(W_t)$ circuit $c$ times with $ac$ ancillary qubits, where $a=O(\log \frac{1}{\Delta})$, $c=O(\log \frac{1}{\sqrt{\epsilon_2}})$). A demonstration of the $PE(W_t)$ circuit is shown in Figure~\ref{PE} below.  

\begin{figure}[ht]
\centering
\scalebox{0.9}{
\begin{quantikz}[row sep=0.5cm, column sep=0.5cm]
\lstick [wires=4]{$\ket{0}^{\otimes a}$ }
&   & \gate{H} &\qw & \qw      & \phantom{a}\cdots \phantom{a}   & \ctrl{4} & \gate[wires=4]{QFT^{-1}_{2^a}} & \qw \\
&  & \gate{H} & \qw      & \ctrl{3} &\phantom{a}\cdots \phantom{a}       & \qw      &                                & \qw \\
&\lstick{\vdots} \setwiretype{n} &\setwiretype{n} &\setwiretype{n}&  &\phantom{a}
\cdots \phantom{a}   &\setwiretype{n}  \\
&   & \gate{H} &  \ctrl{1}   & \qw      &\phantom{a}\cdots \phantom{a}    & \qw &                                & \qw \\
\lstick{$\ket{\psi_k}$   } &\qw & \qw & \gate{W^{2^0}_t} &\gate{W^{2^1}_t} & \phantom{a}\cdots \phantom{a}  & \gate{W^{2^{a-1}}_t}  & \qw & \qw 
\end{quantikz}}
    \caption{Phase estimation circuit $PE(W_t)$, where $H$ is the Hadamard gate, $QFT^{-1}$ denotes the inverse quantum Fourier transform circuit, and $W_t=U(P_t)^{\dagger} S U(P_t) R_{\mathcal{A}} U(P_t)^{\dagger} S U(P_t) R_{\mathcal{A}}$}
    \label{PE}
\end{figure}

The realisation of $V_t$ invokes the \textsc{controlled}-$W_t$ gates at most $O(\frac{1}{\Delta}\log (1 / \sqrt{\epsilon_2}))$ times, where $\Delta=2\theta_1$ as introduced in Section~\ref{Szegedy} is the phase gap. The $V_t$  circuit has the following properties, 
\begin{equation}
\begin{aligned}
V_t |\pi_t  \rangle |0\rangle^{\otimes a c} & = |\pi_t  \rangle |0\rangle^{\otimes a c}, \\
V_t |\psi_k^\pm \rangle |0\rangle^{\otimes a c} & =\sqrt{1-\epsilon_2} |\psi_k^\pm  \rangle |\chi_k  \rangle+\sqrt{\epsilon_2} |\psi_k  ^\pm \rangle|0\rangle^{\otimes a c},
\end{aligned}
\end{equation}
where $ \langle 0^{\otimes ac}|\chi_k  \rangle=0$. The eigenstates $|\pi_t\rangle$ and $ 
\ket{\psi_k^\pm}$ lie in the subspace $\mathcal{H}_{\pi_t}$, with $W(P_t)|\pi_t\rangle =|\pi_t\rangle$ and $W(P_t)\ket{\psi_k^{\pm}}= e^{\pm 2\pi i\theta_k} \ket{\psi_k^\pm}$ for all $k\in [M]$. Essentially, the circuit $V_t$ differentiates the stationary state  $|\pi_t\rangle$ and its orthogonal eigenstates $\ket{\psi_k^\pm}$  up to some error $O(\sqrt{\epsilon_2})$. Similarly to $R_{t+1}$, one can construct a circuit  

\begin{equation}
    \tilde{R_t}=V_t^\dagger (\mathbb{1} \otimes (e^{\frac{i \pi}{3}} \ket{0}\bra{0}^{\otimes ac}+(\mathbb{1}-\ket{0}\bra{0}^{\otimes ac})  )  ) V_t,
\end{equation} 
that $\| \tilde{R}_t|\psi_k^\pm\rangle \ket{0}^{\otimes ac}-{R}_t|\psi_k^\pm\rangle \ket{0}^{\otimes ac} \| \leq 2 \sqrt{\epsilon_2}$.

From Lemma~\ref{Grover_fix_point}, it can be seen that a transition from $\ket{\pi_t}$ to $\ket{\tilde{\pi}_{t+1}}$ can be achieved by a sequence of $\tilde{U}_{t,m}=\cdots R_{t+1}\tilde{R}_{t}R_{t+1}\ket{\pi_t}$, and the state after applying the first $R_{t+1}$ as in Eqn.\@\eqref{R_{t+1}} will be in the space of $\operatorname{span}\{\ket{\pi_{t+1}},\ket{\pi_{t+1}^\perp}\}$. Then for implementing $\tilde{R}_{t}$ composed of the phase estimation circuit $PE(W_t)$, one need to make sure  that $\ket{{\pi}_{t+1}}$ has a large overlap with the good state  $\ket{\pi_{t+1, \text { good}}}$, which is spanned by  $\ket{\psi_j^\pm}$ with $\theta_j=\Omega(\sqrt{1/\tau})=\Omega(\Delta')$.  This is guaranteed by Lemma~\ref{effective_gap} as a warm start condition is satisfied. Therefore, for implementing $\tilde{R}_{t}$, it is sufficient to implement  a circuit $V_t$ that effectively differentiates states  $\ket{\psi_j^\pm}$ and  ${\ket{\pi_t}}$.  Furthermore, as  proved in Lemma~\ref{effective_gap}, $\|\left|e_{t+1}\right\rangle \| \leq \sqrt{\epsilon_3}$,  one can implement $\tilde{R}_t$ such that $\| \tilde{R}_t\left|\phi\right\rangle-{R}_t\left|\phi\right\rangle \| \leq 2 \sqrt{\epsilon_3}$ for any state $\ket{\phi}$ appears in the Grover's ${\pi}/{3}$ rotation from $\ket{{\pi}_{t+1}}$ to $\ket{{\pi}_{t}}$. Therefore,  it takes $ac=O(c \log \frac{1}{\Delta'})=O(c \log \sqrt{\tau})$ ancillary to implement the circuit $V_t$  and it  invokes the \textsc{controlled}-$W_t$ gate at most $O(\frac{1}{\Delta'}\log (1 / \sqrt{\epsilon_3}))= O(\sqrt{\tau}\log (1 / \sqrt{\epsilon_3})) $ times.
\end{proof}

\begin{proof}
From the above Lemmas~\ref{effective_gap},~\ref{Grover_fix_point},~\ref{implement_R}, one can see that the total error of preparing state $\ket{\tilde{\pi}_{t+1}}$ through $ |\tilde{\pi}_{t+1} \rangle = \tilde{U}_{t+1,m} \ket{\pi_t} $ is bounded by $O(\epsilon_1+(\frac{1}{p} \log \frac{1}{\epsilon_1}) \sqrt{\epsilon_3})$. Taking $\epsilon_1=O(\epsilon)$, $\epsilon_3=O(\epsilon^2)$, one can obtain  $\|\left|\tilde{\pi}_{t+1}\right\rangle-\left|\pi_{t+1}\right\rangle \|= O(\epsilon)$ by using $O(\sqrt{\tau} \log(1 /{\epsilon}))$ calls to the quantum walk operators. Theorem~\ref{theorem_4} is then proved.
\end{proof}

\subsection{Non-destructive Amplitude Estimation}\label{secC2}

\begin{theorem}{~\cite[Theorem 18]{harrow2020adaptive}}\label{NAE}
 Given state $|\psi\rangle$ and reflections $R_\psi=2|\psi\rangle\langle\psi|-$ $I$ and $R=2 P-I$, and any $\eta>0$, there exists a quantum algorithm that outputs $\bar{a}$, an approximation to $a=\langle\psi| P|\psi\rangle$, so that

\begin{equation}
    |\bar{a}-a| \leq 2 \pi \frac{a(1-a)}{M}+\frac{\pi^2}{M^2}
\end{equation}
with probability at least $1-\eta$ and $O(\log (1 / \eta) M)$ uses of $R_\psi$ and $R$. Moreover, the algorithm restores the state $|\psi\rangle$ with probability at least $1-\eta$.
\end{theorem}

It takes use of the  amplitude estimation circuit  $(F_M^{-1} \otimes I) \Lambda_M(Q)(F_M \otimes I)$ proposed by \cite{brassard2000quantum} (as shown in Figure~\ref{AE}) with $Q=-R_{\psi}R$. Here the output $\tilde{a}=\sin ^2(\pi \frac{y}{M})$ is an approximation of $a=\braket{\psi|P|\psi}$, and the $\bar{a}$ in Theorem~\ref{NAE} is the median of multi-copy estimations of $\tilde{a}$.

\begin{figure}[ht]
    \centering
    \scalebox{0.9}{
    \begin{tikzpicture}

\def\qubitsep{1.0}
\def\colsep{1}
\foreach \i in {0,2} {
\draw [thick]  (0, -\i*\qubitsep) -- (6*\colsep, -\i*\qubitsep);}
\node at (-0.2*\colsep, -1*\qubitsep) {\vdots};
\draw [thick] (0, -3.5*\qubitsep) -- (6*\colsep, -3.5*\qubitsep);
\node[left] at (0, 0) {$\ket{0}$};
\node[left] at (0, -2*\qubitsep) {$\ket{0}$};
\node[left] at (0, -3.5*\qubitsep) {$\ket{\psi}$};
\draw[fill=white, thick] (0.8*\colsep, 0.4*\qubitsep) rectangle (2.2*\colsep, -2.4*\qubitsep);
\node at (1.5*\colsep, -1*\qubitsep) {$QFT_M$};
\draw[fill=white,thick] (3.8*\colsep, 0.4*\qubitsep) rectangle (5.2*\colsep, -2.4*\qubitsep);
\node at (4.5*\colsep, -1*\qubitsep) {$QFT_M^{-1}$};
\draw[fill=white,thick] (2.5*\colsep, -4*\qubitsep) rectangle (3.5*\colsep, -3*\qubitsep);
\node at (3*\colsep, -3.5*\qubitsep) {$Q^j$};
\draw[fill=black] (3*\colsep, 0) circle (2pt);
\node  at (3*\colsep, -1*\qubitsep) {\vdots};
\draw[fill=black] (3*\colsep, -2*\qubitsep) circle (2pt);
\draw [thick] (3*\colsep, 0)-- (3*\colsep, -0.7*\qubitsep);
\draw [thick] (3*\colsep, -3*\qubitsep)-- (3*\colsep, -1.5*\qubitsep);
\node[right] at (6*\colsep, -1*\qubitsep) {$\ket{y}$};
\end{tikzpicture}
}
\caption{Amplitude estimation circuit~\citep{brassard2000quantum}, where $F_M$ is the quantum Fourier transform, the \textsc{controlled}--${Q}^j$ operator in the middle is $\Lambda_M(Q)=\sum_{j=0}^{M-1}|j\rangle\langle j| \otimes {Q}^j$}
\label{AE}
\end{figure}

Fitting in our QCP-QW algorithm, at the $t^{th}$ round, one has the initial state $\ket{{\tilde\pi}_t}$ and the aim is to estimating the mean $\bm z_{t}$ and affine transformation matrix $S_{t}$ of distribution $\ket{\tilde{\pi}_{t}}=\ket{\psi}$. The unitary $Q$ can be applied by letting  $Q=-R_{\tilde{\pi}_t} R=\tilde{U}_{t-1,m}(\mathbb{1}-2\ket{{\tilde\pi}_{t-1}}\bra{{\tilde\pi}_{t-1}}) \tilde{U}_{t-1,m}^\dagger R = (\mathbb{1}-2\ket{{\tilde\pi}_{t}}\bra{{\tilde\pi}_{t}}) R$. It can be easily seen that here the operation $\mathbb{1}-2\ket{{\tilde\pi}_{t-1}}\bra{{\tilde\pi}_{t-1}}$ can be realized similarly to the selective phase shift defined in Lemma~\ref{Grover_fix_point}, by letting $R'_{t-1}=e^{i \pi}\operatorname{\Pi}_{t-1}+\operatorname{\Pi}_{t-1}^\perp=\mathbb{1}-2{\Pi}_{t-1}$. From Lemma~\ref{implement_R}, it can be shown that the $R'_{t-1}$ gate can be implemented with $O^*(D^{1.5})$ calls to the \textsc{controlled}-$W_{t-1}$ gates. Moreover, it is proved in Theorem~\ref{theorem_4}, the quantum-walk-evolution unitary  $\tilde{U}_{t-1,m}$ can be realised using $O^*(\sqrt{\tau})=O^*(D^{1.5})$ calls to the \textsc{controlled}-$W_{t-1}$ gates. Hence, the unitary $R_{\tilde\pi_t}$  can be implemented with $O^*(D^{1.5})$ calls to the \textsc{controlled}-$W_{t-1}$ gates. We note that the observable $P$ needs to be designed carefully for different observables. 
To summarise, on each copy of quantum state $\ket{\psi}=\ket{\tilde{\pi}_t}=  \tilde{U}_{t-1,m} \ket{\tilde{\pi}_{t-1}}$, one is allowed to estimate a component of the covariance matrix $A_t$ up to some error and restore the sample with high probability with $O^*(1)$ uses of $R_{\tilde\pi_t}$ and $P$. As a result, on each copy, the non-destructive estimation circuit takes $O^*(D^{1.5})$ calls of the \textsc{controlled}-$W_{t-1}$ gates.

\subsection{Quantum-Walk-Update Operator Implementation}\label{secC3}

\begin{lemma}[Transition probability of the Hit-and-Run walk
]{~\cite[Lemma 3]{lovasz1999hit}}
For the \textbf{Hit-and-Run} with uniform sampling scheme defined as in Section~\ref{HCPRW}. If the current point of Hit-and-Run is $\bm x$, then the density function of the distribution of the next point $\bm y$ is
\begin{equation}\label{HAR_prob}
f_{\bm x}(\bm y)=\frac{2}{D \pi_D} \frac{1}{\ell(\bm x, \bm y)\|\bm x-\bm y\|^{D-1}},
\end{equation}
where $\pi_D=\frac{\pi^{D / 2}}{\Gamma(1+D / 2)}$ is the volume $B_1$  unit ball, and $\ell(\bm x, \bm y)$ is the chord length through $\bm x$ and $\bm y$.
\end{lemma}

\begin{lemma}
The hit-and-run quantum-walk-update unitary 
$U(P_t)$ can be achieved with $O^*(1)$ uses of membership oracles  $\mathcal{O}_{\mathcal{P}_t}$ and $O^*(D)$ gate complexity, in both discrete and continuous quantum computers. 
\end{lemma}

\begin{proof}
    The implementation of the hit-and-run quantum-walk-update operator has been illustrated in detail for both continuous and discrete cases by \cite{chakrabarti2023quantum, li2022quantum}. In the discrete case (see Eqn.\@\eqref{quantum_update}), 
$$
U(P_t)|\bm x\rangle|\bm 0\rangle=\ket{\bm x}\ket{\bm p_x}=\ket{\bm x}\sum_{\bm y \in \mathcal{P}_{t,\epsilon}} \sqrt{p_{xy}}\ket{\bm y},
$$
and in the continuous case, 
$$
U(P_t)|\bm x\rangle|\bm0\rangle=\int_{\mathcal{P}_t} \sqrt{p_{xy}}|\bm x\rangle|\bm y\rangle \mathrm{d} \bm y,
$$
with $p_{xy}$ denoting the $x,y$ element of the stochastic transition matrix $P_t$. Here we only describe briefly how the unitary can be constructed in the discrete case, and interested readers can refer to the paper~\citep{chakrabarti2023quantum} for a construction in continuous quantum computing. As we mentioned in Section~\ref{QCP_QW}, we take the $\epsilon$-$descritiztion$ representation, such that our quantum states are defined in the $\epsilon$-$net$ of $\mathcal{P}_{t,\epsilon} \in\mathbb{R}^D$, thus $O^*(D)$ qubits are needed for representing quantum samples in the space. 

At the $t^{th}$ round, the walk starts from a given current point $\bm x$ that  belongs to the \emph{current position register} $\in\mathcal{H}_{curr}$. For choosing a direction uniformly, one can construct a $D$-dimensional Gaussian distribution to obtain a uniform distribution of all directions on the $S^{D-1}$ sphere. This can be done using $O^*(D)$ ancillary \emph{direction registers} $\in \mathcal{H}_{dir}$, and since the Gaussian distribution can be efficiently integrable, one can leverage the Grover-Rudolph method~\citep{grover2002creating} to prepare the states. 
Then, for a direction $\bm u$, the line $\ell=\bm x+ t\bm u$ uniquely defines a chord of $\mathcal{P}_t\cap \ell$ with its length $\ell(t_1,t_2)$, where $t\in [t_1,t_2] $ is a scalar and  $t_1,t_2$ are the two endpoints. It is known that classically, the endpoints and the chord length can be determined within an error $\epsilon$ by using $O(\log(1/\epsilon))$ membership oracles $\mathcal{O}_{\mathcal{P}_t}$. Then, for the next proposed point, a new state $\ket{\bm y}=\frac{1}{\sqrt{|t_2-t_1|}}\sum_{t\in[t_1,t_2]} \ket{\bm x+t \bm u}$ of a \emph{new position register}  $\in \mathcal{H}_{new}$ can be constructed with controlled operations on $\bm x \in\mathcal{H}_{curr}$. Uncomputing the direction registers, the desired state of $\ket{\bm x}\sum_{\bm y \in \mathcal{P}_{t,\epsilon}} \sqrt{p_{xy}}\ket{\bm y}$ as in Eqn.\@\eqref{quantum_update},  can be acquired. After change of variables, it can be verified that the probability $p_{xy}$ of moving from $\bm x$ to $\bm y$ is the same as $f_{\bm x}(\bm y)$ of Eqn.\@\eqref{HAR_prob}.
Overall, it is evident that the hit-and-run quantum-walk-update unitary
$U(P_t)$ can be achieved with $O^*(1)$ uses of membership oracles and $O^*(D)$ gate complexity. 

\end{proof}





\end{appendices}
\section*{Declarations}
\begin{itemize}
\item Funding - This work is supported by the Plan France 2030 through the PEPR integrated project EPiQ ANR-22-PETQ-0007; by the ANR JCJC DisQC ANR-22-CE47-0002-01; by the QuanTEdu-France ANR-22-CMAS-0001, and as part of the Initiative d’Excellence d’Aix-Marseille Université—A*MIDEX AMX-21-RID-011.
\item Conflict of interest - The authors have no competing interests to declare that are relevant to the content of this article.
\item 
Ethics approval and consent to participate - Not applicable.
\item Consent for publication - All authors consent to the publication of this
work.
\item Data availability - The data of Figure~\ref{hard_data} is publicly
available as referenced in the text. Except that, no datasets were generated or analysed during the current study.

\item Materials availability - Not applicable.
\item Code availability - Not applicable.
\item Authors' contribution - All authors contributed to the conception and design of the study.  Xiaoyu Sun and Mathieu Roget performed the formal analyses of Section 3. Xiaoyu Sun investigated Section 4.3 and drafted the initial manuscript. Hachem Kadri and Giuseppe Di Molfetta proposed substantial amendments during the revision process. Hachem Kadri and Giuseppe Di Molfetta proposed and supervised the project. All authors read and commented on previous versions of the manuscript and approved the final manuscript.
\end{itemize}

\bibliography{sn-bibliography}

\end{document}